\documentclass[11pt]{article}%
\usepackage{amsfonts}
\usepackage{amsmath}
\usepackage{amssymb}
\usepackage{fullpage}
\usepackage{graphicx}
\usepackage{color}%
\setcounter{MaxMatrixCols}{30}
\providecommand{\U}[1]{\protect\rule{.1in}{.1in}}

\newtheorem{theorem}{Theorem}

\newtheorem{lemma}[theorem]{Lemma}

\newenvironment{proof}[1][Proof]{\noindent\textbf{#1.} }{\ \rule{0.5em}{0.5em}}

\newcommand{\tr}{\text{Tr}}

\newcommand{\bal}{\begin{align*}}
\newcommand{\eal}{\end{align*}}
\newcommand{\bpm}{\begin{pmatrix}}
\newcommand{\epm}{\end{pmatrix}}

\newcommand{\balign}{\begin{align*}}
\newcommand{\ealign}{\end{align*}}

\def\cH{\mathcal{H}}

\def\cD{\mathcal{D}}
\def\cN{\mathcal{N}}

\def\cM{\mathcal{M}}

\def\tR{\widetilde{R}}
\newcommand{\be}{\begin{equation}}

\newcommand{\ee}{\end{equation}}

\newcommand{\bea}{\begin{eqnarray}}
\newcommand{\eea}{\end{eqnarray}}

\DeclareMathOperator{\id}{id}
\def\reff#1{(\ref{#1})}
\DeclareRobustCommand\openone{\leavevmode\hbox{\small1\normalsize\kern-.33em1}}
\begin{document}

\title{\textbf{Quantum-to-classical rate distortion coding}}
\author{Nilanjana Datta\\\textit{Statistical Laboratory, University of Cambridge},\\
\textit{Wilberforce Road, Cambridge CB3 0WB, United Kingdom}
\and Min-Hsiu Hsieh\\\textit{Centre for Quantum Computation \& Intelligent Systems},\\
\textit{Faculty of Engineering and Information Technology,}\\
\textit{University of Technology, Sydney, P.O.~Box 123, Broadway NSW 2007, Australia}
\and Mark M. Wilde\\\textit{School of Computer Science, McGill University,}\\
\textit{Montreal, Quebec H3A 2A7, Canada}
\and Andreas Winter\\
\textit{ICREA -- Instituci\'{o} Catalana de Recerca i Estudis Avan\c{c}ats,}\\
\textit{Pg. Lluis Companys 23, ES-08010 Barcelona, Spain, and}\\
\textit{F\'{\i}sica Te\`{o}rica: Informaci\'{o} i Fenomens Qu\`{a}ntics,}\\
\textit{Universitat Aut\`{o}noma de Barcelona, ES-08193 Bellaterra (Barcelona), Spain}\\[1mm]
\textit{Department of Mathematics, University of Bristol, Bristol BS8 1TW, United Kingdom}\\[1mm]
\textit{Centre for Quantum Technologies, National University of Singapore}}
\maketitle

\begin{abstract}
We establish a theory of quantum-to-classical rate distortion coding.
In this setting, a sender Alice has many copies of a quantum information source.
Her goal is to transmit classical information about the source, obtained by
performing a measurement on it, to a receiver Bob, up to some
specified level of distortion. We derive a single-letter formula
for the minimum rate of classical communication needed for this
task. We also evaluate this rate in the case in which Bob has some
quantum side information about the source. Our results imply that,
in general, Alice's best strategy is a non-classical one, in which
she performs a collective measurement on successive outputs of the source.
\end{abstract}

\section{Introduction}

A fundamental task in quantum information theory is the reliable compression
of information emitted by a quantum information source, to enable efficient
storage of the data. Schumacher \cite{Schumacher:1995dg} proved that, for a
memoryless source, the optimal rate of \emph{{lossless}} data compression (in
which the original data is recovered perfectly in the limit of asymptotically
many copies of the source) is given by the von Neumann entropy of the source.
The corresponding rate for a classical source is given by its Shannon entropy
\cite{Shannon:1948wk}.

In realistic applications it may be possible, however, to tolerate imperfect
recovery of the signals, and hence allow for a bounded distortion of the
original information. In fact, this may even be necessitated by the lack of
sufficient storage. These considerations have led to the development of rate
distortion theory \cite{B71}, which is the theory of lossy data compression.
The fundamental results of classical rate distortion theory are attributed to
Shannon \cite{Shannon:tf} and date back to 1948. Its quantum
counterpart was introduced by Barnum \cite{B00} and developed further in
Refs.~\cite{Devetak:2002it,CW08}. Recently, Datta \textit{et al.} identified a
regularized expression for the quantum rate distortion function as well as a
single-letter expression for the entanglement-assisted quantum rate distortion
function~\cite{DHW11}.

In this paper, we consider the situation in which a party (say, Alice) obtains
many copies of a quantum information source described by a quantum state, and
she already has a description of the source in terms of its density operator.
She is only allowed to perform measurements on the source. Her aim is to
suitably compress the classical data resulting from her measurements and send
it to another party (say, Bob) such that, upon decompression, the data
recovered by Bob has a fixed level of distortion from the quantum source
(specified by a suitable distortion observable). Alice is allowed to perform
any measurement that she wishes on the source states to produce a classical
sequence, with the requirement that the average symbol-wise distortion of this
sequence be no larger than some prescribed amount. Analogous to previous
terminology used in quantum information theory, we refer to this as
\emph{{quantum-to-classical}} rate distortion theory, since it deals with an
analysis of the trade-off between the optimal rate of compression of the data
obtained by measurements on the \emph{{quantum source}}, and the allowed
distortion on the recovered \emph{{classical data}}. This trade-off is
quantified by the quantum-to-classical rate-distortion function.

Another way of emphasizing the relevance of quantum-to-classical rate
distortion theory is by adopting the perspective that all classical data
arises from a measurement of a quantum state. This is especially important in
cases where the source is truly non-classical, such as an atomic decaying
process or a highly attenuated laser.\footnote{We note that a similar perspective was used
to justify the development of quantum-to-classical randomness extractors
\cite{BFW11}.} In particular, we can imagine that a memoryless classical
source arises from an appropriate measurement on the states emitted by a
quantum source, and the resulting classical data is some description or
characterization of the original quantum source. Thus, this perspective
necessitates a revision of Shannon's rate-distortion theory \cite{Shannon:tf}
by allowing for an arbitrary measurement to be performed on the original
quantum source. A naive approach to this setting would be to measure each
individual output of the quantum source, treat the resulting classical data as
information emitted by a classical source, and then apply Shannon's
rate-distortion theory to the latter.

Here, we instead allow for collective measurements on the outputs of the
source, and our approach is to apply a derandomized measurement compression
protocol to achieve this task \cite{Winter01a}. We find a single-letter
formula for the quantum-to-classical rate distortion function, expressed as a
minimization of the quantum mutual information over all maps that meet the distortion
constraint. Our result implies that, in general, a quantum strategy is needed
to achieve optimal compression rates and that Shannon's rate-distortion theory
is insufficient in this setting. This result is analogous to the fact that
collective measurements are needed in general in the well-known
Holevo-Schumacher-Westmoreland theorem \cite{Hol98,PhysRevA.56.131} regarding
classical communication over quantum channels (see Ref.~\cite{GGLMSY04} for an
explicit example of a channel for which collective measurements outperform
classical strategies).

In the classical setting, the optimal rate of data compression can be reduced
if the decoder (Bob) has some side information at his disposal. The first
discovery in this direction is due to Slepian and Wolf \cite{SW73}, who showed
that the optimal lossless compression rate is given by the entropy of the
source conditioned on the side information. Wyner and Ziv extended these
results to the case of lossy classical data compression with classical side
information \cite{WZ76}. For the quantum setting, one might imagine that
quantum side information is available at the decoder. In Ref.~\cite{DW03},
Devetak and Winter proved that if Bob has quantum side information at his
disposal, then the optimal lossless compression rate for a classical
information source is reduced from the Shannon entropy of the source by the
Holevo information between the source and the quantum side information. The
case of lossy classical data compression with quantum side information, which
is a quantum generalization of the Wyner-Ziv problem, was studied by Luo and
Devetak \cite{LD09}.

We also study the effect of quantum side information on the above-mentioned
quantum-to-classical rate distortion function. In particular, we consider the
case in which some quantum side information about the original quantum source
is available to Bob. He is allowed to use this information to recover the
classical data obtained from Alice's measurements on the source states. We
also let Alice and Bob share common randomness. In this case, we find a
single-letter formula for the corresponding quantum-to-classical rate
distortion function. One of our assumptions in this setting is that the
process of compression and decompression only causes a negligible disturbance
to the quantum side information. This assumption can be justified by the
possibility of Bob wanting to use the quantum side information in some future
protocol. Our result improves upon the aforementioned work of Luo and Devetak
\cite{LD09} in the sense that we find a matching single-letter converse for
this setting. The achievability part of the proof of this theorem exploits
measurement compression with quantum side information \cite{WHBH12}.

The paper is organized as follows. We summarize some necessary definitions and
prerequisites in Section~\ref{sec-def}, and in Section~\ref{sec-obsvble}, we
review the concept of a distortion observable (originally introduced in
Refs.~\cite{WA01,CW08}). In Section~\ref{sec-qcrd}, we introduce the task of
quantum-to-classical rate distortion coding, define a suitable distortion
observable, and derive an expression for the quantum-to-classical rate
distortion function. In Section~\ref{sec-qsi}, we study quantum-to-classical
rate distortion in the presence of quantum side information and common
randomness.
The main results of this paper are given by Theorem~\ref{thm:qc-rd} of
Section~\ref{sec-qcrd} and Theorems~\ref{thm:qc-qsi} and \ref{thm:qc-cr-qsi}
of Section~\ref{sec-qsi}.

\section{Notations and definitions}

\label{sec-def}Let $\mathcal{B}(\mathcal{H})$ denote the algebra of linear
operators acting on a finite-dimensional Hilbert space $\mathcal{H}$ and let
$\mathcal{D}(\mathcal{H})$ denote the set of positive operators of unit trace
(states) acting on $\mathcal{H}$. For any given pure state $|\psi\rangle
\in\mathcal{H}$ we denote the projector $|\psi\rangle\langle\psi|$ simply as
$\psi$. The trace distance between two operators $A$ and $B$ is given by
$\left\Vert {A-B}\right\Vert _{1}\equiv\text{Tr}|A-B|$, where $|C|\equiv
\sqrt{C^{\dag}C}$. Throughout this paper we restrict our considerations to
finite-dimensional Hilbert spaces, and we take the logarithm to base $2$. In
the following we denote a completely positive trace-preserving (CPTP) map
$\mathcal{N}:\mathcal{B}(\mathcal{H}_{A})\rightarrow\mathcal{B}(\mathcal{H}%
_{B})$ simply as $\mathcal{N}^{A\rightarrow B}$. Similarly we denote an
isometry $U:\mathcal{B}(\mathcal{H}_{A})\rightarrow\mathcal{B}(\mathcal{H}%
_{B}\otimes\mathcal{H}_{E})$ simply as $U^{A\rightarrow BE}$. The identity map
on states in $\mathcal{D}(\mathcal{H}_{A})$ is denoted as $\mathrm{{id}}_{A}$.

The von Neumann entropy of a state $\rho\in\mathcal{D}(\mathcal{H}_{A})$ is
defined as $H(\rho)\equiv-\text{Tr}\{\rho\log\rho\}$. In the following we use
$H(A|B)_{\rho}$ and $I(A;B)_{\rho}$ to respectively denote the conditional
quantum entropy and the quantum mutual information of a bipartite state
$\rho_{AB}$, and $I(A;C|B)_{\sigma}$ to denote the conditional quantum mutual
information for a tripartite state $\sigma_{ABC}$ (see, e.g.,
Refs.~\cite{book2000mikeandike,W11}). We also employ the following properties
of the quantum mutual information:

\begin{lemma}
[Quantum data processing inequality \cite{SN96, W11}]\label{dataproc} If
$\omega_{AB^{\prime}} = (\mathrm{{id}}_{A} \otimes\mathcal{N}^{B\to B^{\prime
}}) \sigma_{AB}$, where $\mathcal{N}^{B\to B^{\prime}}$ is a CPTP map, then
\begin{equation}
I(A;B)_{\sigma}\ge I(A;B^{\prime})_{\omega}.
\end{equation}

\end{lemma}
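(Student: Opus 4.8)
The plan is to reduce the statement to two standard structural facts about the von Neumann entropy: invariance of the quantum mutual information under local isometries, and its monotonicity under discarding a subsystem (the latter being equivalent to strong subadditivity). The bridge between these facts and the CPTP map $\mathcal{N}^{B\to B'}$ is provided by the Stinespring dilation.

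First I would introduce an isometric extension $U^{B\to B'E}$ of $\mathcal{N}^{B\to B'}$, so that $\mathcal{N}^{B\to B'}(\cdot) = \mathrm{Tr}_E[U(\cdot)U^{\dag}]$, and define the extended state $\psi_{AB'E} = (\mathrm{id}_A \otimes U)\,\sigma_{AB}\,(\mathrm{id}_A \otimes U)^{\dag}$, which satisfies $\mathrm{Tr}_E\,\psi_{AB'E} = \omega_{AB'}$. Since $U$ acts only on the $B$ system and is an isometry, the reduced states on $B$ and on $AB$ are carried to states on $B'E$ and $AB'E$ with identical nonzero spectra; hence $H(B)_{\sigma} = H(B'E)_{\psi}$ and $H(AB)_{\sigma} = H(AB'E)_{\psi}$, while trivially $H(A)_{\sigma} = H(A)_{\psi}$. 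Expanding the mutual information in terms of entropies then yields $I(A;B)_{\sigma} = I(A;B'E)_{\psi}$.

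The remaining step is to show $I(A;B'E)_{\psi} \ge I(A;B')_{\psi}$. Writing both sides in terms of entropies, this is equivalent to $H(E|B')_{\psi} \ge H(E|AB')_{\psi}$, i.e.\ to $I(A;E|B')_{\psi} \ge 0$, which is precisely strong subadditivity of the von Neumann entropy. Since $\psi_{AB'} = \omega_{AB'}$, we conclude $I(A;B)_{\sigma} = I(A;B'E)_{\psi} \ge I(A;B')_{\psi} = I(A;B')_{\omega}$, as claimed.

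As for the main obstacle: there is no genuine difficulty beyond invoking the correct black box — the entire content of the lemma is carried by strong subadditivity (equivalently, by monotonicity of the quantum relative entropy under CPTP maps, which would give an even shorter argument via $I(A;B)_{\sigma} = D(\sigma_{AB}\|\sigma_A \otimes \sigma_B)$ and data processing applied to the map $\mathrm{id}_A \otimes \mathcal{N}^{B\to B'}$). The only points requiring care are the bookkeeping of which systems the isometry acts on, and the use of isometric invariance of entropy, both of which are routine.
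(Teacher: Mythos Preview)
Your argument is correct and is the standard textbook proof (Stinespring dilation followed by strong subadditivity), and the alternative one-line route via monotonicity of relative entropy that you mention is equally valid. Note, however, that the paper does not supply its own proof of this lemma at all: it is stated with citations to \cite{SN96, W11} and used as a black box, so there is no in-paper argument to compare against.
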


\begin{lemma}
[Superadditivity of the quantum mutual information \cite{DHW11}]\label{super}
The mutual information is superadditive in the sense that, for any CPTP map
$\mathcal{N}^{A_{1}A_{2} \to B_{1}B_{2}}$,%
\[
I\left(  R_{1}R_{2};B_{1}B_{2}\right)  _{\sigma}\geq I\left(  R_{1}%
;B_{1}\right)  _{\sigma}+I\left(  R_{2};B_{2}\right)  _{\sigma},
\]
where
\[
\sigma_{R_{1}R_{2}B_{1}B_{2}} = \mathcal{N}^{A_{1}A_{2} \to B_{1}B_{2}}
\left(  \phi_{R_{1}A_{1}} \otimes\varphi_{R_{2}A_{2}} \right)  ,
\]
and $\phi_{R_{1}A_{1}}$ and $\varphi_{R_{2}A_{2}}$ are pure bipartite states.
\end{lemma}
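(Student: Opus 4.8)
The plan is to obtain the bound from three standard ingredients: the chain rule for quantum mutual information, non-negativity of the conditional quantum mutual information (strong subadditivity), and the data-processing inequality of Lemma~\ref{dataproc}. The role played by the hypothesis is simply to force $R_1$ and $R_2$ to be uncorrelated in $\sigma$: since $\mathcal{N}^{A_1 A_2 \to B_1 B_2}$ acts as the identity on $R_1 R_2$ and the input factorizes, tracing out $B_1 B_2$ gives $\sigma_{R_1 R_2} = \phi_{R_1} \otimes \varphi_{R_2}$, so $I(R_1; R_2)_\sigma = 0$. (Purity of $\phi$ and $\varphi$ is not actually needed for this, only the product structure of the input across the two systems.)

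First I would apply the chain rule to the left-hand side,
\[
I(R_1 R_2; B_1 B_2)_\sigma = I(R_1; B_1 B_2)_\sigma + I(R_2; B_1 B_2 \,|\, R_1)_\sigma .
\]
For the first term, Lemma~\ref{dataproc} applied to the partial trace over $B_2$ gives $I(R_1; B_1 B_2)_\sigma \geq I(R_1; B_1)_\sigma$. For the second term, the chain rule together with non-negativity of the conditional mutual information gives
\[
I(R_2; B_1 B_2 \,|\, R_1)_\sigma = I(R_2; B_2 \,|\, R_1)_\sigma + I(R_2; B_1 \,|\, R_1 B_2)_\sigma \geq I(R_2; B_2 \,|\, R_1)_\sigma .
\]

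It then remains to strip the conditioning on $R_1$. Here I would expand $I(R_2; R_1 B_2)_\sigma$ two ways with the chain rule,
\[
I(R_2; R_1)_\sigma + I(R_2; B_2 \,|\, R_1)_\sigma = I(R_2; R_1 B_2)_\sigma = I(R_2; B_2)_\sigma + I(R_2; R_1 \,|\, B_2)_\sigma ,
\]
and use $I(R_2; R_1)_\sigma = 0$ together with $I(R_2; R_1 \,|\, B_2)_\sigma \geq 0$ to conclude $I(R_2; B_2 \,|\, R_1)_\sigma \geq I(R_2; B_2)_\sigma$. Combining the three displayed bounds yields $I(R_1 R_2; B_1 B_2)_\sigma \geq I(R_1; B_1)_\sigma + I(R_2; B_2)_\sigma$, as claimed. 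I do not anticipate a real obstacle: the argument is a short manipulation of entropic identities, and the only point that needs care is invoking the product structure of the input in exactly the right place so that the cross term $I(R_1; R_2)_\sigma$ vanishes.
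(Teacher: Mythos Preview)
Your proof is correct. The paper does not actually prove this lemma; it is stated with a citation to \cite{DHW11} and used as a black box, so there is no in-paper argument to compare against. Your chain-rule/strong-subadditivity manipulation is the standard route: the only substantive input is that $\sigma_{R_1 R_2}=\phi_{R_1}\otimes\varphi_{R_2}$, which you identify and use exactly where it is needed to kill the cross term $I(R_1;R_2)_\sigma$. A minor stylistic alternative, closer to how the cited reference argues, collapses your last two steps into a single application of data processing: from $I(R_2;B_1B_2\,|\,R_1)_\sigma=I(R_2;R_1B_1B_2)_\sigma-I(R_2;R_1)_\sigma=I(R_2;R_1B_1B_2)_\sigma$ one traces out $R_1B_1$ to get $I(R_2;B_2)_\sigma$ directly. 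But this is the same content as what you wrote, just repackaged. Your remark that purity of $\phi$ and $\varphi$ is not needed---only the tensor-product structure across $R_1A_1$ and $R_2A_2$---is also correct.
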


In proving our first theorem (Theorem~\ref{thm:qc-rd} of Section
\ref{sec-qcrd}) we make use of the \textquotedblleft measurement
compression\textquotedblright\ theorem (Theorem~2 of Ref.~\cite{Winter01a}).
The latter specifies an optimal two-dimensional rate region characterizing the
resources (namely, common randomness and classical communication) needed for
an asymptotically faithful simulation of a measurement on a quantum state. For
an exact statement of the theorem, see Refs.~\cite{Winter01a,WHBH12}. Here we
give a brief description of its content. Let $\psi_{RA}^{\rho}$ denote the
purification of a quantum state $\rho\in\mathcal{D}(\mathcal{H}_{A})$,
multiple copies of which are in Alice's possession. Suppose Alice does a
measurement, given by a positive operator-valued measure (POVM)
$\Lambda\equiv\{\Lambda_{x}\}$, on each of the
systems in her possession. In the ideal measurement compression protocol, the
state of the classical registers containing Alice's measurement outcomes and
the purifying reference systems $R$ is equivalent to many copies of the
following state:
\begin{equation}
\sigma_{XR}\equiv\sum_{x}|x\rangle\langle x|_{X}\otimes\text{Tr}_{A}\left\{
(I_{R}\otimes\Lambda_{x})\psi_{RA}^{\rho}\right\}  . \label{sig}%
\end{equation}

The measurement compression theorem asserts that if Alice and Bob share
$nH(X|R)_{\sigma}$ bits of common randomness, then it is possible for them to
simulate the measurement $\Lambda^{\otimes n}$ on the state $\rho^{\otimes n}$
with approximately $n I(X;R)_{\sigma}$ bits of classical communication, for
$n$ large enough. The simulation becomes faithful in the limit $n \to\infty$,
in the sense that a verifying party who possesses the classical registers and
the reference systems cannot distinguish between the output of the simulation
and the ideal protocol. If no common randomness is present and Alice is
required to obtain the outcomes of the measurement in addition to Bob, then
the classical communication needed is equal to the Shannon entropy
$H(X)_{\sigma}$. In the above, $H(X|R)_{\sigma}$ and $I(X;R)_{\sigma}$
respectively denote the conditional entropy and the mutual information of the
state $\sigma_{XR}$ defined above. For a more detailed statement of the
theorem, see the proof of Theorem~\ref{thm:qc-rd} in Section \ref{sec-qcrd}.

\section{Distortion observables}

\label{sec-obsvble}As discussed in the Introduction, in rate distortion theory
one allows the data which is recovered after the compression-decompression
scheme to be distorted by some finite amount from the original data. There are
various possible choices of the distortion measure, depending on the nature of
the application. For example, in classical rate distortion theory, the Hamming
distance and the mean squared error are natural choices of the distortion
measure \cite{B71,book1991cover}. In quantum rate distortion theory, the
distortion measure is usually defined in terms of the entanglement fidelity
(see, e.g., Refs.~\cite{B00, DHW11} and references therein). However, since
the distortion is a physical quantity, it is natural to associate with it an
observable in the quantum setting (as discussed in Section~II of
Ref.~\cite{CW08} and in unpublished work \cite{WA01}). This is reviewed below.

In the classical setting, let $x\in\mathcal{X}$ denote the letters of a source
alphabet and let $y\in\mathcal{Y}$ denote the letters of a reconstruction
alphabet. Then to determine the distortion between an input and output letter,
one defines a non-negative cost function $d(x,y)$ (e.g., the Hamming distance
or the squared error), and the average distortion is then given by
\begin{equation}
\sum_{x}\sum_{y}p(x)q(y|x)d(x,y), \label{classical}%
\end{equation}
where $q(y|x)$ is the conditional probability of getting the letter $y$ after
reconstruction when the source letter is $x$, and $p(x)$ is the probability of
source letter $x$.

In the quantum case, one defines a \emph{{distortion observable}} $\Delta$
\cite{CW08}. For example, suppose that $\Delta$ is given by
\begin{equation}
\Delta=\sum_{x}\sum_{y}d(x,y)|x\rangle\langle x|\otimes|y\rangle\langle y| ,
\label{del_class}%
\end{equation}
where $|x\rangle$ are the Schmidt vectors of the following purification of the
source state $\rho$:
\begin{equation}
|\psi_{RA}^{\rho}\rangle=\sum_{x}\sqrt{\lambda_{x}}|x\rangle_{R}|x\rangle_{A},
\end{equation}
so that $\rho=\text{Tr}_{R}\{\psi_{RA}^{\rho}\}$.

Then we recover the expression (\ref{classical}) for the average distortion in
the classical case as follows. Let $\Phi:\mathcal{B}(\mathcal{H}_{A}%
)\mapsto\mathcal{B}(\mathcal{H}_{B})$ denote a map on the source state. Then
the average distortion is given by%
\begin{align}
&  \text{Tr}\left\{  \Delta\bigl((\id\otimes\Phi)(\psi_{RA}^{\rho
})\bigr)\right\} \nonumber\\
&  =\text{Tr}\left\{  \Bigl(\sum_{x,y}d(x,y)|x\rangle\langle x|_{R}%
\otimes|y\rangle\langle y|_{B}\Bigr)\Bigl(\sum_{x^{\prime},y^{\prime}}%
\sqrt{\lambda_{x^{\prime}}}\sqrt{\lambda_{y^{\prime}}}|x^{\prime}%
\rangle\langle y^{\prime}|_{R}\otimes\Phi(|x^{\prime}\rangle\langle y^{\prime
}|_{A})\Bigr)\right\} \nonumber\\
&  =\sum_{x,y}d(x,y)\lambda_{x}\langle y|\,\Phi(|x\rangle\langle
x|_{A})|y\rangle.
\end{align}
Let us define $q(y|x)\equiv\langle y|\Phi(|x\rangle\langle x|_{A})|y\rangle$
since it can be interpreted as the conditional probability of the map $\Phi$
yielding the letter $y$, given that the source letter was $x$. Then setting
$p(x)=\lambda_{x}$, (since $\lambda_{x}$, being an eigenvalue of $\rho$, is a
probability), we recover the expression for the classical average distortion
as in (\ref{classical}).

\section{Quantum-to-classical rate-distortion coding}

\label{sec-qcrd}

Consider a memoryless quantum information source $\{\rho,\mathcal{H}_{A}\}$.
In quantum-to-classical (q-c) rate distortion, Alice starts with $n$ copies
$\rho^{\otimes n}$ of the source state and performs a POVM $\Lambda
^{(n)}=\{\Lambda_{x^{n}}\}$ on it, with the POVM elements $\Lambda_{x^{n}}%
\in\mathcal{B}(\mathcal{H}_{A}^{\otimes n})$ being indexed by classical
sequences $x^{n}\in\mathcal{X}^{n}$ ($\mathcal{X}$ being a finite alphabet),
which correspond to the different possible outcomes of the measurement. It is
convenient to define a measurement map $\mathcal{M}_{\Lambda^{(n)}}$
corresponding to the POVM $\Lambda^{(n)}$ as follows: For any $\sigma_{n}%
\in\mathcal{D}(\mathcal{H}_{A}^{\otimes n})$,
\begin{equation}
\mathcal{M}_{\Lambda^{(n)}}(\sigma_{n})\equiv\sum_{x^{n}\in\mathcal{X}^{n}%
}\text{Tr}\left\{  \Lambda_{x^{n}}\sigma_{n}\right\}  |x^{n}\rangle\langle
x^{n}|. \label{meas_op}%
\end{equation}
The above specifies that with probability $\text{Tr}\left\{  \Lambda_{x^{n}%
}\sigma_{n}\right\}  $ the outcome of the POVM $\Lambda^{(n)}$ on the state
$\sigma_{n}$ is given by the classical sequence $x^{n}$.
Figure~\ref{fig:qc-rd}\ depicts the most general protocol for
quantum-to-classical rate-distortion coding.%
\begin{figure}
[ptb]
\begin{center}
\includegraphics[
natheight=3.306200in,
natwidth=6.719600in,
height=1.7564in,
width=3.5405in
]%
{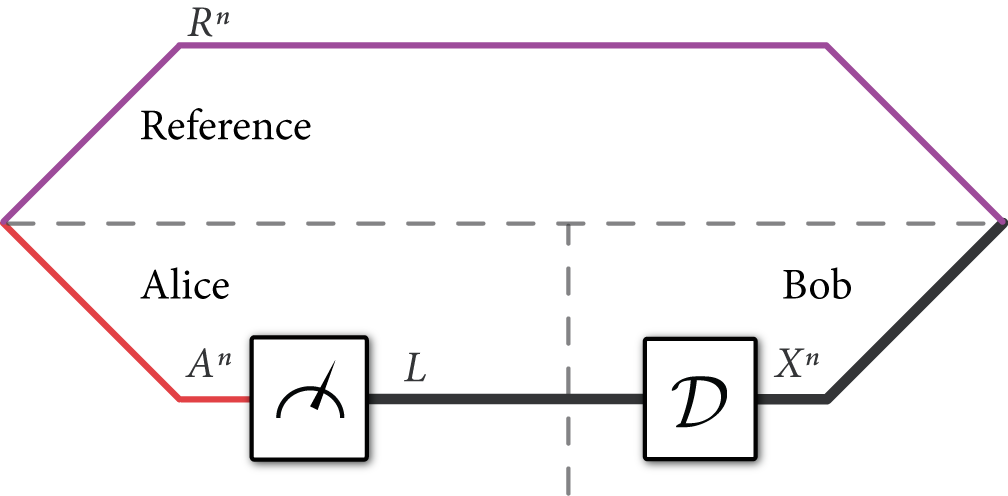}%
\caption{The most general protocol for quantum-to-classical rate-distortion
coding. Alice has many copies of the quantum information source, on which she
performs a collective measurement with classical output $L$. She sends the
variable $L$ over noiseless classical bit channels to Bob. Bob then performs a
classical decoding map on $L$ that outputs the classical sequence $X^{n}$. The
average deviation of this sequence from the quantum source, according to some
distortion observable, provides a measure of the distortion caused by this
protocol.}%
\label{fig:qc-rd}%
\end{center}
\end{figure}

If $\psi_{R^{n}A^{n}}^{\rho}$ denotes a purification of $\rho^{\otimes n}$,
then the following bipartite state characterizes both the classical outcome of
the POVM $\Lambda^{(n)}$ on $\rho^{\otimes n}$ and the post-measurement state
of the purifying reference system:
\begin{align}
\sigma_{R^{n}X^{n}}  &  \equiv\left(  \id_{R^{n}}\otimes\mathcal{M}%
_{\Lambda^{(n)}}\right)  (\psi_{R^{n}A^{n}}^{\rho})\nonumber\\
&  =\sum_{x^{n}}\text{Tr}_{A^{n}}\left\{  \bigl(I_{R^{n}}\otimes\Lambda
_{x^{n}}\bigr)\psi_{R^{n}A^{n}}^{\rho}\right\}  \otimes|x^{n}\rangle\langle
x^{n}|_{X^{n}}.%
\end{align}

We define the q-c distortion measure for a state $\rho\in\mathcal{D}%
(\mathcal{H}_{A})$ with purification $|\psi_{RA}^{\rho}\rangle$ and a POVM
$\Lambda=\{\Lambda_{x}\}$ as
\begin{equation}
d(\rho,\mathcal{M}_{\Lambda})\equiv\text{Tr}\left(  \Delta(\id\otimes
\mathcal{M}_{\Lambda})\left(  \psi_{RA}^{\rho}\right)  \right)  ,
\label{measure}%
\end{equation}
where $\mathcal{M}_{\Lambda}$ is the measurement map corresponding to
$\Lambda$, and $\Delta$ is a q-c distortion observable given by
\begin{equation}
\Delta\equiv\Delta_{RX}\equiv\sum_{x}\Delta_{x}\otimes|x\rangle\langle x|,
\label{qc}%
\end{equation}
with $\Delta_{x}\geq0$.

A q-c rate distortion code of rate $R$ is given by a POVM $\Lambda^{(n)}$ with
$\left\lfloor 2^{nR}\right\rfloor $ outcomes, i.e., $\Lambda^{(n)}
=\{\Lambda_{x^{n}}\}$ with
\begin{equation}
\#\{ x^{n} \in\mathcal{X}^{n} \, : \, \Lambda_{x^{n}} \ne0\} = \left\lfloor
2^{nR}\right\rfloor .
\end{equation}

To define the average distortion resulting from this POVM, we consider a
symbol-wise q-c distortion observable
\begin{equation}
\Delta^{(n)}\equiv\frac{1}{n}\sum_{i=1}^{n}\Delta_{R_{i}X_{i}}\otimes
I_{RX}^{\otimes\lbrack n]\backslash i}, \label{symb}%
\end{equation}
where each operator $\Delta_{R_{i}X_{i}}$ is of the form (\ref{qc}) and
$I_{RX}^{\otimes\lbrack n]\backslash i}$ denotes the identity operator acting
on all but the $i^{th}$ member of the tensor-product of Hilbert spaces
$(\mathcal{H}_{R} \otimes\mathcal{H}_{X})^{\otimes n}$. The average distortion
is then defined as
\begin{align}
{\overline{d}}(\rho,\mathcal{M}_{\Lambda^{(n)}})  &  \equiv\text{Tr}\left(
\Delta^{(n)}(\id_{R^{n}}\otimes\mathcal{M}_{\Lambda^{(n)}})\psi_{R^{n}A^{n}%
}^{\rho}\right) \nonumber\label{avgdist}\\
&  =\frac{1}{n}\sum_{i=1}^{n}\text{Tr}\left(  \Delta_{R_{i}X_{i}}%
\,\sigma_{R_{i}X_{i}}\right)  ,\nonumber
\end{align}
where $\sigma_{R_{i}X_{i}}=\text{Tr}_{\neq i}\sigma_{R^{n}X^{n}}$, with
$\sigma_{R^{n}X^{n}}\equiv(\id_{R^{n}}\otimes\mathcal{M}_{\Lambda^{(n)}}%
)\psi_{R^{n}A^{n}}^{\rho}$. \smallskip

For any $R,D\geq0$, the pair $(R,D)$ is said to be an \emph{{achievable}} q-c
rate distortion pair if there exists a sequence of POVMs $\{\Lambda
^{(n)}\}_{n\ge1}$ of rate $R$ such that
\begin{equation}
\lim_{n\rightarrow\infty}{\overline{d}}(\rho,\mathcal{M}_{\Lambda^{(n)}})\leq
D. \label{eq:rate-dist-criterion}%
\end{equation}
The q-c rate distortion function is then defined as
\begin{equation}
R^{qc}(D)\equiv\inf\{R\,:\,(R,D)\,\,{\hbox{achievable}}\}. \label{def1}%
\end{equation}
\smallskip

\noindent The following theorem provides a single-letter expression for
$R^{qc}(D)$.

\begin{theorem}
\label{thm:qc-rd}For a memoryless quantum information source $\{\rho
,\mathcal{H}_{A}\}$, a quantum-to-classical distortion observable $\Delta_{RX}$,
and any given distortion $D\geq0$, the
quantum-to-classical rate distortion function is given by%
\begin{equation}
R^{qc}\left(  D\right)  =\min_{%
\genfrac{}{}{0pt}{}{\mathrm{{POVM}}\,\Lambda\equiv\{\Lambda_{x}\}}{{d(\rho
,}\mathcal{M}_{{\Lambda}}{)\leq D}}%
}I(X;R)_{\sigma} \label{up1}%
\end{equation}
where $d(\rho,\mathcal{M}_{\Lambda})$ is defined through (\ref{measure}%
)-(\ref{qc}) and
\begin{equation}
\sigma_{RX}\equiv(\id_{R}\otimes\mathcal{M}_{\Lambda})(\psi_{RA}^{\rho}%
)=\sum_{x}{\emph{Tr}}_{A}\left\{  (\id_{R}\otimes\Lambda_{x})\psi_{RA}^{\rho
}\right\}  \otimes|x\rangle\langle x|_{X}. \label{eq:cq-state-for-qc}%
\end{equation}

\end{theorem}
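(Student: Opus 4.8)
The proof has two parts: achievability (showing that the right-hand side of \reff{up1} is an achievable rate) and the converse (showing that no rate below it is achievable). For achievability, the plan is to fix an optimal single-letter POVM $\Lambda = \{\Lambda_x\}$ attaining the minimum, with associated state $\sigma_{RX}$ as in \reff{eq:cq-state-for-qc} satisfying $d(\rho,\mathcal{M}_\Lambda) \le D$. First I would invoke the measurement compression theorem of Ref.~\cite{Winter01a} in its derandomized form: simulating $\Lambda^{\otimes n}$ on $\rho^{\otimes n}$ costs approximately $n I(X;R)_\sigma$ bits of classical communication (no common randomness, since Alice must also hold the outcomes) only if we also use $n H(X|R)_\sigma$ bits of \emph{local} randomness at Alice, which can be derandomized — i.e., there exists a deterministic choice of the local randomness that still produces a faithful simulation. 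This yields a genuine q-c rate distortion code of rate $I(X;R)_\sigma + \delta$. The key remaining step is to argue that the distortion constraint is met: since the simulated state is $\varepsilon$-close in trace norm to $\sigma_{XR}^{\otimes n}$ (on the registers held by the verifier, which include the reference systems $R^n$), and the distortion observable $\Delta^{(n)}$ is a fixed bounded operator, the average distortion of the code is within $O(\|\Delta\|_\infty \varepsilon)$ of $\frac{1}{n}\sum_i \mathrm{Tr}(\Delta_{R_i X_i}\sigma_{R_i X_i}) = \mathrm{Tr}(\Delta_{RX}\sigma_{RX}) \le D$. Taking $n \to \infty$ and $\varepsilon \to 0$ gives achievability of $(I(X;R)_\sigma, D)$, hence of the minimum.

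For the converse, suppose $(R,D)$ is achievable, witnessed by a sequence of POVMs $\Lambda^{(n)}$ with $\lfloor 2^{nR}\rfloor$ outcomes and $\overline{d}(\rho,\mathcal{M}_{\Lambda^{(n)}}) \le D + o(1)$. Let $\sigma_{R^n X^n} = (\id_{R^n}\otimes\mathcal{M}_{\Lambda^{(n)}})\psi_{R^n A^n}^\rho$. I would bound $nR \ge \log(\text{number of outcomes}) \ge H(X^n)_\sigma \ge I(X^n;R^n)_\sigma$, the first inequality because entropy of a classical register is at most the log of its alphabet size, and the last because $I(X^n;R^n) = H(X^n) - H(X^n|R^n) \le H(X^n)$. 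Now I apply superadditivity of the quantum mutual information (Lemma~\ref{super}): writing $\psi_{R^n A^n}^\rho = \bigotimes_{i=1}^n \psi_{R_i A_i}^\rho$ as a tensor product of pure bipartite states and $\mathcal{M}_{\Lambda^{(n)}}$ as a CPTP map from $A^n$ to the classical register $X^n$, we get $I(X^n;R^n)_\sigma \ge \sum_{i=1}^n I(X_i; R_i)_{\sigma}$ — wait, this requires the decomposition in Lemma~\ref{super} to match; more carefully, I would use the chain $I(X^n;R^n) \ge \sum_i I(X^n; R_i) \ge \sum_i I(X_i; R_i)$ via data processing (discarding the other $X_j$'s), or directly invoke the superadditivity lemma with the tensor-product structure of the input. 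For each $i$, the marginal state $\sigma_{R_i X_i}$ is of the form $(\id_{R}\otimes\mathcal{M}_{\Lambda^{(i)}})(\psi_{RA}^\rho)$ for the induced single-copy POVM $\Lambda^{(i)} = \{\mathrm{Tr}_{\ne i}[(\,\cdot\,)\Lambda_{x^n}]\}$ obtained by marginalizing; hence each $\sigma_{R_i X_i}$ arises from some genuine single-letter POVM.

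The final step is a convexity/averaging argument. Define the single-letter POVM $\overline{\Lambda}$ by choosing $i$ uniformly at random and applying $\Lambda^{(i)}$ (this is a valid POVM whose output register records both $i$ and the outcome, or we marginalize over $i$); by linearity of the distortion measure, $d(\rho,\mathcal{M}_{\overline\Lambda}) = \frac{1}{n}\sum_i d(\rho, \mathcal{M}_{\Lambda^{(i)}}) = \overline{d}(\rho,\mathcal{M}_{\Lambda^{(n)}}) \le D + o(1)$, and by concavity of mutual information in the relevant sense (or again by treating the uniform index as classical side information and using $I(X;R) \le \frac{1}{n}\sum_i I(X_i;R_i)$ after conditioning), $I(X;R)_{\overline\sigma} \le \frac{1}{n}\sum_i I(X_i;R_i)_\sigma \le R$. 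Thus $R \ge \min I(X;R)_\sigma$ over POVMs meeting distortion $D + o(1)$; a continuity argument (the minimum is continuous in $D$, since the feasible set varies continuously and mutual information is continuous) then lets me take the limit to conclude $R \ge R^{qc}(D)$.

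\textbf{Main obstacle.} The achievability direction's subtle point is the derandomization: the measurement compression theorem natively consumes common randomness, and one must check that it can be replaced by local randomness at Alice and then fixed to a deterministic value without destroying either the simulation fidelity or — crucially — the distortion guarantee, which must hold for the \emph{derandomized} code. On the converse side, the delicate step is the handling of the averaged single-letter POVM and ensuring the convexity inequality for mutual information goes the right way together with a clean continuity argument at the boundary of the distortion constraint; getting the direction of superadditivity versus subadditivity correct (superadditivity of $I(X^n;R^n)$ lower-bounds it by the sum, which is what we need) is where I would be most careful.
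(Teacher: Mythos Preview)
Your overall strategy matches the paper's --- achievability via Winter's measurement compression theorem followed by derandomization, and the converse via $nR \ge H \ge I(X^n;R^n)$, superadditivity (Lemma~\ref{super}), and convexity --- but the achievability argument has a real gap in the order of operations. You claim one can derandomize so that ``there exists a deterministic choice of the local randomness that still produces a faithful simulation,'' and then deduce the distortion bound from trace-norm closeness of the simulated state. This is backwards: fixing the common randomness to a single value $m$ need \emph{not} preserve $\varepsilon$-closeness to $(\id\otimes\mathcal{M}_{\Lambda^{\otimes n}})(\psi_{RA}^{\rho})^{\otimes n}$; the measurement compression theorem only guarantees that closeness for the \emph{average} over $m$. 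The correct argument (which the paper gives) reverses the steps: first use the trace-norm bound for the \emph{randomized} protocol to obtain $\frac{1}{M}\sum_m \text{Tr}\{\Delta^{(n)}\omega^{(m)}\} \le D + d_{\max}\varepsilon$ (with $d_{\max}=\|\Delta^{(n)}\|_\infty$), and \emph{then} derandomize on this scalar inequality --- since the average over $m$ is small, at least one $m$ achieves the bound. Derandomization works here precisely because distortion is a single real-valued functional, not because it preserves the full state approximation; you correctly flag this as the main obstacle, but your stated resolution is the wrong one.

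On the converse, your averaged-POVM route (convexity of $I(X;R)$ in the channel for fixed input, then continuity of the minimum in $D$) is a valid variant. The paper instead bounds each term by $I(X_i;R_i) \ge R^{qc}\bigl(d(\rho,\mathcal{F}_n^{(i)})\bigr)$ and then invokes convexity and monotonicity of $R^{qc}(D)$ directly, which sidesteps the continuity argument but packages the same content.
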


\begin{proof}
We first give the proof of achievability, which follows directly from the
measurement compression theorem \cite{Winter01a} (summarized briefly in the
previous section). Our approach is similar to one used before \cite{W02}: exploit a channel simulation protocol
and derandomize the common randomness consumed by this protocol. So, fix the POVM $\Lambda=\{\Lambda_{x}\}$ that minimizes the
RHS of (\ref{up1}). Thus we have
\begin{equation}
{d}(\rho,\mathcal{M}_{\Lambda})=\text{Tr}\left[  \Delta(\id_{R}\otimes
\mathcal{M}_{\Lambda})\left(  \psi_{RA}^{\rho}\right)  \right]  \leq D.
\label{key}%
\end{equation}
In Ref.~\cite{Winter01a}, it was proved that there exists a finite set of
POVMs $\{\Lambda^{(m)}=\{\Lambda_{x^{n}}^{(m)}\}_{x^{n}\in\mathcal{X}^{n}%
}\,:m=1,2,\ldots,M\}$, each having at most $L$ outcomes, i.e., $\#\{x^{n}%
\in\mathcal{X}^{n}\,:\,\Lambda_{x^{n}}^{(m)}\neq0\}\leq L$, with%
\begin{equation}
L=2^{nI(R;X)_{\sigma}+O(\sqrt{n})}, \label{emm}%
\end{equation}
such that for any ${\varepsilon}>0$ and $n$ large enough, the POVM
$\tilde{\Lambda}^{(n)}=\{\tilde{\Lambda}_{x^{n}}\}$ defined as
\begin{equation}
\tilde{\Lambda}^{(n)}\equiv\frac{1}{M}\sum_{m=1}^{M}\Lambda^{(m)},
\end{equation}
satisfies the following condition:%
\begin{equation}
||\left(  \id_{R^{n}}\otimes\mathcal{M}_{\tilde{\Lambda}^{(n)}}\right)
\psi_{R^{n}A^{n}}^{\rho}-\left(  \id_{R^{n}}\otimes\mathcal{M}_{\Lambda
^{\otimes n)}}\right)  \psi_{R^{n}A^{n}}^{\rho}||_{1}\leq{\varepsilon,}
\label{ext}%
\end{equation}
where for any sequence $x^{n}=x_{1}\ldots x_{n}\in\mathcal{X}^{n}$ we have
$\Lambda_{x^{n}}=\Lambda_{x_{1}}\otimes\ldots \otimes \Lambda_{x_{n}}$.
Further, due to our choice (\ref{symb}) of a \emph{{symbol-wise}} q-c
distortion observable $\Delta^{(n)}$, we have that%
\begin{equation}
\overline{d}(\rho,\mathcal{M}_{\Lambda^{\otimes n}})=\text{Tr}[\Delta
^{(n)}(\id_{R^{n}}\otimes\mathcal{M}_{\Lambda^{\otimes n}})\psi_{R^{n}A^{n}%
}^{\rho}]=d(\rho,\mathcal{M}_{\Lambda}). \label{one2}%
\end{equation}
From (\ref{ext}), we know that the protocol for simulating the tensor-product
measurement has measurement encodings $\{\Lambda_{l}^{\left(  m\right)  }\}$.
Let $\mathcal{D}^{\left(  m\right)  }\left(  l\right)  $ denote the
corresponding classical decodings which construct the sequences $x^{n}$ from
the values of $l$ and $m$, where $l$ is the measurement outcome and $m$ is the
common randomness. Then
\[
\left\Vert (\id_{R^{n}} \otimes\mathcal{M}_{\Lambda^{\otimes n}})\left(
\psi_{R^{n}A^{n}}^{\rho} \right)  -\frac{1}{M} \sum_{m,l}\text{Tr}_{A^{n}%
}\left\{  (\id_{R^{n}} \otimes\Lambda_{l}^{\left(  m\right)  })\psi
_{R^{n}A^{n}}^{\rho} \right\}  \otimes\left\vert \mathcal{D}^{\left(
m\right)  }\left(  l\right)  \right\rangle \left\langle \mathcal{D}^{\left(
m\right)  }\left(  l\right)  \right\vert \right\Vert _{1}\leq\varepsilon.
\]
Then using (\ref{key}) we obtain a bound on the average distortion resulting
from the action of the POVM $\tilde{\Lambda}^{(n)}$ on the source state
$\rho^{\otimes n}$ as follows:%
\begin{align*}
&  {\overline{d}}(\rho,{M}_{\tilde{\Lambda}^{(n)}})\\
&  =\text{Tr}\left\{  \Delta^{\left(  n\right)  }\frac{1}{\left\vert
{M}\right\vert }\sum_{m,l}\text{Tr}_{A^{n}}\left\{  (\id_{R^{n}}
\otimes\Lambda_{l}^{\left(  m\right)  })\psi_{R^{n}A^{n}}^{\rho} \right\}
\otimes\left\vert \mathcal{D}^{\left(  m\right)  }\left(  l\right)
\right\rangle \left\langle \mathcal{D}^{\left(  m\right)  }\left(  l\right)
\right\vert \right\} \\
&  =\frac{1}{\left\vert {M}\right\vert }\sum_{m}\text{Tr}\left\{
\Delta^{\left(  n\right)  }\sum_{l}\text{Tr}_{A^{n}}\left\{  (\id_{R^{n}}
\otimes\Lambda_{l}^{\left(  m\right)  })\psi_{R^{n}A^{n}}^{\rho} \right\}
\otimes\left\vert \mathcal{D}^{\left(  m\right)  }\left(  l\right)
\right\rangle \left\langle \mathcal{D}^{\left(  m\right)  }\left(  l\right)
\right\vert \right\} \\
&  =\frac{1}{\left\vert \mathcal{M}\right\vert }\sum_{m}d_{\max}%
\text{Tr}\left\{  \frac{\Delta^{\left(  n\right)  }}{d_{\max}}\sum
_{l}\text{Tr}_{A^{n}}\left\{  (\id_{R^{n}} \otimes\Lambda_{l}^{\left(
m\right)  })\psi_{R^{n}A^{n}}^{\rho} \right\}  \otimes\left\vert
\mathcal{D}^{\left(  m\right)  }\left(  l\right)  \right\rangle \left\langle
\mathcal{D}^{\left(  m\right)  }\left(  l\right)  \right\vert \right\} \\
&  \leq d_{\max}\text{Tr}\left\{  \frac{\Delta^{\left(  n\right)  }}{d_{\max}%
}(\id_{R^{n}}\otimes\mathcal{M}_{\Lambda^{\otimes n}})\left(  \psi_{R^{n}%
A^{n}}^{\rho} \right)  \right\}  +d_{\max}\varepsilon\\
&  \leq D+d_{\max}\varepsilon,
\end{align*}
where $d_{\max}$ is the maximum eigenvalue of $\Delta^{\left(  n\right)  }$.
Also, in the above, we see how it is possible to derandomize the common
randomness:\ there exists a choice of the $m$ such that%
\begin{equation}
\text{Tr}\left\{  \Delta^{\left(  n\right)  }\sum_{l}\text{Tr}_{A^{n}}\left\{
\id_{R^{n}} \otimes\Lambda_{l}^{\left(  m\right)  }\psi_{R^{n}A^{n}}^{\rho}
\right\}  \otimes\left\vert \mathcal{D}^{\left(  m\right)  }\left(  l\right)
\right\rangle \left\langle \mathcal{D}^{\left(  m\right)  }\left(  l\right)
\right\vert \right\}  \leq D+d_{\max}\varepsilon. \label{eq:derandomize-QC-RD}%
\end{equation}
Hence,%
\[
\lim_{n\rightarrow\infty}{\overline{d}}(\rho,\mathcal{D}_{n}\circ
\mathcal{M}_{\Lambda^{(n)}})\leq D.
\]
Thus, a measurement compression protocol directly yields a q-c rate distortion protocol.

Now we give a proof for the converse. Let $\Lambda^{(n)}:A^{n}\mapsto L$ be a
POVM with $\Lambda^{(n)}=\{\Lambda_{l}^{(n)}\}$, and let $\mathcal{D}%
_{n}:L\mapsto X^{n}$ be a decoding map (with $L$ and $X^{n}$ denoting
classical systems) such that
\begin{equation}
\lim_{n\rightarrow\infty}{\overline{d}}(\rho,\mathcal{D}_{n}\circ
\mathcal{M}_{\Lambda^{(n)}})\leq D, \label{eq:distortion-converse}%
\end{equation}
where $\mathcal{M}_{\Lambda^{(n)}}$ is the measurement map corresponding to
the POVM $\Lambda^{(n)}$. Defining $\sigma_{R^{n}L}\equiv\left(  \id_{R^{n}%
}\otimes\mathcal{M}_{\Lambda^{(n)}}\right)  \psi_{R^{n}A^{n}}^{\rho},$ we have
$\sigma_{L}=\sum_{l}\text{Tr}(\Lambda_{l}^{(n)}\rho^{\otimes n})|l\rangle
\langle l|$ and
\begin{align}
nR  &  \geq H(L)_{\sigma}\nonumber\\
&  \geq I(L;R^{n})_{\sigma}\nonumber\\
&  \geq I(X^{n};R^{n})_{\omega}. \label{last}%
\end{align}
The first inequality holds because the entropy $H(L)_{\sigma}$ is upper
bounded by the entropy $nR$ of the uniform distribution. In the second line,
the inequality follows because $I(L;R^{n})_{\sigma}=H(L)_{\sigma}%
-H(L|R^{n})_{\sigma}$ and $H(L|R^{n})_{\sigma}\geq0$ since $L$ is classical.
In the third line, $\omega_{X^{n}R^{n}D}\equiv(\id_{R^{n}}\otimes
\mathcal{D}_{n})\sigma_{R^{n}L}$. This inequality follows from the quantum
data processing inequality (Lemma~\ref{dataproc}). Continuing, we have%
\begin{align}
{\hbox{RHS of \reff{last} \,}}  &  \geq\sum_{i=1}^{n}I(X_{i};R_{i})\nonumber\\
&  \geq\sum_{i=1}^{n}R^{qc}\left(  d(\rho,\mathcal{F}_{n}^{(i)})\right)
\nonumber\\
&  =n\sum_{i=1}^{n}\frac{1}{n}R^{qc}\left(  d(\rho,\mathcal{F}_{n}%
^{(i)})\right) \nonumber\\
&  \geq nR^{qc}\left(  \sum_{i=1}^{n}\frac{1}{n}d(\rho,\mathcal{F}_{n}%
^{(i)})\right) \nonumber\\
&  \geq nR^{qc}(D),
\end{align}
for $n$ sufficiently large. In the above, $\mathcal{F}_{n}^{(i)}$ is the
marginal operation on the $i$-th copy of the source space induced by the
overall operation $\mathcal{D}_{n}\circ\mathcal{M}_{\Lambda^{(n)}}$. The first
inequality follows from the superadditivity of the quantum mutual information
(Lemma \ref{super}). The second inequality follows from the fact that the map
$\mathcal{F}_{n}^{(i)}$ has distortion $d(\rho,\mathcal{F}_{n}^{(i)})$, which,
by definition (\ref{def1}), is lower bounded by the q-c rate distortion
function corresponding to this distortion. The last two inequalities follow
from the convexity of the q-c rate distortion function, from the assumption
that the average distortion of the protocol is less than or equal to $D$ for
$n$ large enough, i.e.,
\[
\sum_{i=1}^{n}\frac{1}{n}d(\rho,\mathcal{F}_{n}^{(i)})\leq D,\quad
{\hbox{for $n$ sufficiently large}},
\]
and the fact that $R^{qc}(D)$ is a non-increasing function of $D$.
\end{proof}

A natural choice for each $\Delta_{x}$ in the distortion observable in
(\ref{qc}) is%
\begin{equation}
\Delta_{x}=I-|x\rangle\langle x|.
\end{equation}
For such a choice, the distortion of the classical data, resulting from
the measurement, is measured with respect to the classical data that would result from
an ideal measurement of the source state $\rho$ in its eigenbasis. However, such a choice is effectively classical because the operators $\Delta_{x}$
are diagonal in the Schmidt basis of $\psi_{RA}^{\rho}$. We show in
Lemma~\ref{lem:no-better-than-shannon}\ below that, for such a choice of the distortion observable, the best strategy for rate-distortion
coding amounts to an effectively classical strategy, in which Alice measures each
output of the source state in its eigenbasis, thus obtaining a classical sequence, which she then compresses by applying the purely classical protocol for Shannon's rate-distortion coding. Thus, a necessary condition for there to be a quantum
advantage in quantum-to-classical rate distortion coding is that the operators
$\Delta_{x}$ should not be diagonal in the Schmidt basis of $\psi_{RA}^{\rho}%
$. After Lemma~\ref{lem:no-better-than-shannon}, we provide an example of a
quantum source and a distortion observable for which quantum-to-classical rate
distortion coding gives an advantage over the above classical strategy.

\begin{lemma}
\label{lem:no-better-than-shannon}If each operator $\Delta_{x}$, in the definition \reff{qc} of the distortion observable, is diagonal in
the Schmidt basis of $\psi_{RA}^{\rho}$, then 
a quantum-to-classical rate distortion coding scheme has no advantage over a classical scheme, in the following sense: the optimal measurement map is a 
von Neumann measurement in the eigenbasis of the source state, followed
by  classical
post-processing of the measurement result according to Shannon's rate
distortion theory.
\end{lemma}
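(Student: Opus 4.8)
The plan is to start from the single-letter formula of Theorem~\ref{thm:qc-rd} and show that, under the hypothesis that every $\Delta_x$ is diagonal in the Schmidt basis $\{|x\rangle\}$ of $\psi_{RA}^\rho$, the minimization over all POVMs $\Lambda$ can be restricted, without loss of optimality, to the single von Neumann measurement $\{|x\rangle\langle x|\}$ in the eigenbasis followed by a classical stochastic post-processing channel. Fix an arbitrary POVM $\Lambda=\{\Lambda_{x'}\}$ feasible for the distortion constraint, with associated state $\sigma_{RX}$ as in \reff{eq:cq-state-for-qc}. The key observation is that both the objective $I(X;R)_\sigma$ and the distortion $d(\rho,\mathcal{M}_\Lambda)$ depend on $\Lambda$ only through certain diagonal data in the Schmidt basis. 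Indeed, since $\psi_{RA}^\rho=\sum_x\sqrt{\lambda_x}|x\rangle_R|x\rangle_A$, one computes
\[
\sigma_{RX}=\sum_{x',x,y}\sqrt{\lambda_x}\sqrt{\lambda_y}\,\langle y|\Lambda_{x'}|x\rangle\,|x\rangle\langle y|_R\otimes|x'\rangle\langle x'|_X,
\]
so the reduced state on $R$ within each fixed classical label $x'$ is a genuine (generally non-diagonal) operator. However, the distortion $\mathrm{Tr}(\Delta_x\sigma_{R_iX_i})$ with each $\Delta_x$ diagonal picks out only the \emph{diagonal} entries $\langle y|\Lambda_{x'}|y\rangle$; define the classical channel $q(y|x')\propto\lambda_y\langle y|\Lambda_{x'}|y\rangle$, or more cleanly, pass to the dephased POVM $\Lambda'_{x'}\equiv\sum_y |y\rangle\langle y|\Lambda_{x'}|y\rangle\langle y|$, which is still a valid POVM. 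This dephasing does not change the distortion at all, by diagonality of $\Delta_x$.

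Next I would show the dephasing does not increase the mutual information. The map that takes $\psi_{RA}^\rho$, dephases the $R$ system in its Schmidt basis (equivalently, since $\psi_{RA}^\rho$ is maximally correlated in that basis, dephasing $A$ has the same effect) and then applies $\mathcal{M}_\Lambda$, produces exactly $\sigma'_{RX}$ built from $\Lambda'$; more precisely, let $\mathcal{Z}$ be the completely dephasing channel on $R$ in the Schmidt basis, then $(\mathrm{id}_R\otimes\mathcal{M}_{\Lambda'})(\psi_{RA}^\rho)=(\mathcal{Z}_R\otimes\mathrm{id}_X)\big[(\mathrm{id}_R\otimes\mathcal{M}_\Lambda)(\psi_{RA}^\rho)\big]$. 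By the data-processing inequality (Lemma~\ref{dataproc}) applied to the channel $\mathcal{Z}_R$ acting on the $R$ system, $I(X;R)_{\sigma}\ge I(X;R)_{\sigma'}$. Hence the dephased, effectively classical strategy is at least as good in the objective and identical in distortion, so the infimum in \reff{up1} is attained among such strategies. Finally, once both $R$ and $X$ are classical with $\sigma'_{RX}=\sum_{x,x'}\lambda_x q(x'|x)|x\rangle\langle x|_R\otimes|x'\rangle\langle x'|_X$, the quantity $I(X;R)_{\sigma'}$ is exactly the classical mutual information $I(X;R)$ for the source distribution $\lambda_x$ and test channel $q(x'|x)$, and the distortion constraint becomes $\sum_{x,x'}\lambda_x q(x'|x)\langle x'|\Delta_x|x'\rangle\le D$, which is precisely a classical rate-distortion problem with cost function $d(x,x')=\langle x'|\Delta_x|x'\rangle$. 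Thus $R^{qc}(D)$ coincides with Shannon's rate-distortion function $\min_{q:\ \mathbb{E}d\le D} I(X;R)$, and the optimal scheme is: measure $\rho$ in its eigenbasis, then post-process via the optimal classical test channel.

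The main obstacle is the second step — verifying cleanly that dephasing the reference system is implemented by a CPTP map on $R$ alone (so that Lemma~\ref{dataproc} applies with $R$ in the role of the ``protected'' system $A$ and $X$ untouched), and that the resulting state is genuinely the one coming from the dephased POVM $\Lambda'$. This uses the special ``maximally correlated'' structure of $\psi_{RA}^\rho$ in the Schmidt basis: dephasing $R$ kills exactly the off-diagonal Schmidt coherences, which are the only place where a non-classical $\Lambda$ could have helped the objective, while the distortion never saw those coherences to begin with because each $\Delta_x$ is diagonal. Everything else — relating $I(X;R)_{\sigma'}$ to the classical mutual information, and recognizing the distortion constraint as a classical one — is routine bookkeeping, and the passage to the $n$-copy symbol-wise setting is already handled by Theorem~\ref{thm:qc-rd} itself.
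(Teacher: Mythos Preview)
Your proposal is correct and follows essentially the same route as the paper: replace an arbitrary POVM $\Lambda$ by its dephased version $\Lambda'_{x'}=\sum_z\langle z|\Lambda_{x'}|z\rangle\,|z\rangle\langle z|$, observe that the distortion is unchanged because each $\Delta_{x'}$ is diagonal, and then note that the resulting state $\sigma'_{RX}$ is obtained from $\sigma_{RX}$ by a completely dephasing channel on $R$, so data processing gives $I(X;R)_{\sigma'}\le I(X;R)_\sigma$. One small slip: in your final classical distortion formula the indices are swapped --- it should read $\sum_{x,x'}\lambda_x\,q(x'|x)\,\langle x|\Delta_{x'}|x\rangle$, since $\Delta$ is indexed by the output label and acts on $R$.
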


\begin{proof}
Let $\Lambda$\ denote the minimal POVM\ in (\ref{up1}) for a
given distortion $D$, and let $\cM_{\Lambda}$ denote the corresponding
measurement map. Let the Schmidt decomposition of the purification 
$\psi_{RA}^{\rho}$ of the source state $\rho\in \cD(\cH_A)$ be
as follows:%
\[
\left\vert \psi^{\rho}_{RA}\right\rangle =\sum_{z}\sqrt{p\left(  z\right)
 }\left\vert z\right\rangle _{R}\left\vert z\right\rangle _{A},
 \]
where $p(z)$ are the Schmidt coefficients.
Then the distortion that the map $\cM_{\Lambda}$ causes is as follows:%
\begin{align*}
&  \text{Tr}\left\{  \Delta_{RX}\left(  \text{id}_{R}\otimes\mathcal{M}%
_{\Lambda}\right)  \left(  \psi_{RA}^{\rho}\right)  \right\} \\
&  =\text{Tr}\left\{  \left(  \sum_{x}\Delta_{x}\otimes\left\vert
x\right\rangle \left\langle x\right\vert _{X}\right)  \left(  \sum
_{z,z^{\prime},y}\sqrt{p\left(  z\right)  p\left(  z^{\prime}\right)
}\left\vert z\right\rangle \left\langle z^{\prime}\right\vert _{R}%
\otimes\text{Tr}\left\{  \Lambda_{y}\left\vert z\right\rangle \left\langle
z^{\prime}\right\vert _{A}\right\}  \left\vert y\right\rangle \left\langle
y\right\vert \right)  \right\} \\
&  =\sum_{x,z,z^{\prime}}\sqrt{p\left(  z\right)  p\left(  z^{\prime
}\right)  }\left\langle z^{\prime}\right\vert \Delta_{x}\left\vert
z\right\rangle \ \left\langle z^{\prime}\right\vert \Lambda_{x}\left\vert
z\right\rangle _{A},
\end{align*}
which is equivalent to%
\begin{equation}
\text{Tr}\left\{  \left(  \sum_{x}\Delta_{x}\otimes\Lambda_{x}\right)  \left(
\psi_{RA}^{\rho}\right)  \right\}  . \label{eq:alt-dist-exp}%
\end{equation}
Now suppose 
that each $\Delta_{x}$ is diagonal in
the Schmidt basis $\{|z\rangle_R\}$ of the reference system, so that%
\[
\Delta(x)=\sum_{z}\left\langle z\right\vert \Delta
_{x}\left\vert z\right\rangle_R\left\vert z\right\rangle \left\langle z\right\vert_R.
\]
Then the above expression for the distortion reduces to the following one:%
\be\label{equiv}
\sum_{x,z}p\left(  z\right)  \left\langle z\right\vert \Delta
_{x}\left\vert z\right\rangle_R \ \left\langle z\right\vert \Lambda
_{x}\left\vert z\right\rangle _{A}.
\ee
Consider $\cH_R \simeq \cH_A$ and choose $\{|z\rangle_R\}$ and  $\{|z\rangle_A\}$ to be identical
bases, which we simply denote as $\{|z\rangle\}$. Then \reff{equiv} implies that,
starting from the original POVM\ $\Lambda$, 
we can construct another POVM 
$\Lambda^{\prime}$\ (say) which is diagonal in the eigenbasis of $\rho$, and 
which results in a distortion equal to that caused by the original POVM. 
The POVM $\Lambda^{\prime}$ is given by
\[
\Lambda^{\prime}:=\left\{  \Lambda_{x}^{\prime}\right\} , \quad {\hbox{where}} \quad \Lambda_{x}^{\prime}:=\sum_{z}
\left\langle z\right\vert \Lambda_{x}\left\vert z\right\rangle
\left\vert
z\right\rangle \left\langle z\right\vert.
\]
Clearly, the following identity holds%
\[
\sum_{x,z}p\left(  z\right)  \left\langle z\right\vert \Delta
_{x}\left\vert z\right\rangle \ \left\langle z\right\vert \Lambda
_{x}\left\vert z\right\rangle=\sum_{x,z}p\left(  z\right)
\left\langle z\right\vert \Delta_{x}\left\vert z\right\rangle \ \left\langle
z\right\vert \Lambda_{x}^{\prime}\left\vert z\right\rangle.
\]
The joint state of the reference system and the post-measurement
classical register, resulting from the POVMs $\Lambda$ and $\Lambda^\prime$, are respectively given as follows:%
\be
\sigma_{RX}= (\id_{R}\otimes\mathcal{M}_{\Lambda})(\psi_{RA}^{\rho}), \quad {\hbox{and}} \quad \sigma_{RX}^\prime= (\id_{R}\otimes\mathcal{M}_{\Lambda^{\prime}})(\psi_{RA}^{\rho}),
\ee
where $\mathcal{M}_{\Lambda^{\prime}}$ is the measurement map corresponding to the POVM $\Lambda^{\prime}$.
It turns out that the mutual information $I(X;R)_\sigma$ 
can only be smaller than $I(X;R)_{\sigma^{\prime}}$.
This can be seen as follows. Note that we can equivalently 
write the state 
$\sigma_{RX}$ as
\begin{equation}
\sigma_{RX}=\sum_{x}\left(\sqrt{\rho}\Lambda_{x}^{T}\sqrt{\rho}\right)_{R}\otimes\left\vert
x\right\rangle \left\langle x\right\vert _{X}. \label{eq:cq-state-for-qc-2}%
\end{equation}
Then the state $\sigma^{\prime}_{RX}$ can be written as%
\[
\sigma^{\prime}_{RX}=\sum_{x}\left(\sqrt{\rho}\Bigl(  \sum_{z}\left\langle
z\right\vert \Lambda_{x}^{T}\left\vert z\right\rangle\left\vert z\right\rangle  \left\langle
z\right\vert \Bigr)  \sqrt{\rho}\right)_{R}\otimes\left\vert x\right\rangle
\left\langle x\right\vert _{X}.
\]
Since $\{|z\rangle\}$ is the eigenbasis of $\rho$, it follows
that $[\sqrt{\rho}, |z\rangle \langle z|]=0$,
and hence the above state is equivalent to the following one:%
\[
\sum_{x}\left(\sum_{z} \left\langle z\right\vert \left(
\sqrt{\rho}\Lambda_{x}^{T}\sqrt{\rho}\right)  \left\vert z\right\rangle
\left\vert z\right\rangle \left\langle z\right\vert \right)\otimes\left\vert x\right\rangle \left\langle
x\right\vert _{X},
\]
which is a classical-classical state. 
Note that such a state is equivalent to the state which
would result from the action of a completely dephasing channel on the reference system $R$ of the state $\sigma_{RX}$ given by 
(\ref{eq:cq-state-for-qc-2}), i.e., $\sigma_{RX}^{\prime}= (\cN \otimes {\rm{id}})\sigma_{RX}$, where $\cN$ denotes a completely dephasing 
channel.
The mutual information can only decrease under such a map and hence
$I(X;R)_{\sigma^\prime} \le  I(X;R)_\sigma$.
This implies that in this case the optimal measurement to perform on the source is a von Neumann measurement in the eigenbasis of $\rho$, followed by classical
post-processing according to the conditional distribution given by $p\left(
x|z\right)  \equiv\left\langle z\right\vert \Lambda_{x}\left\vert
z\right\rangle $ (that this is a distribution follows from the fact that
$\sum_{x}\Lambda_{x}=I$). Thus, this is equivalent to what one would obtain by
exploiting Shannon's rate distortion theorem in a straightforward way.
\end{proof}
\begin{figure}
[ptb]
\begin{center}
\includegraphics[
width=3.5405in
]%
{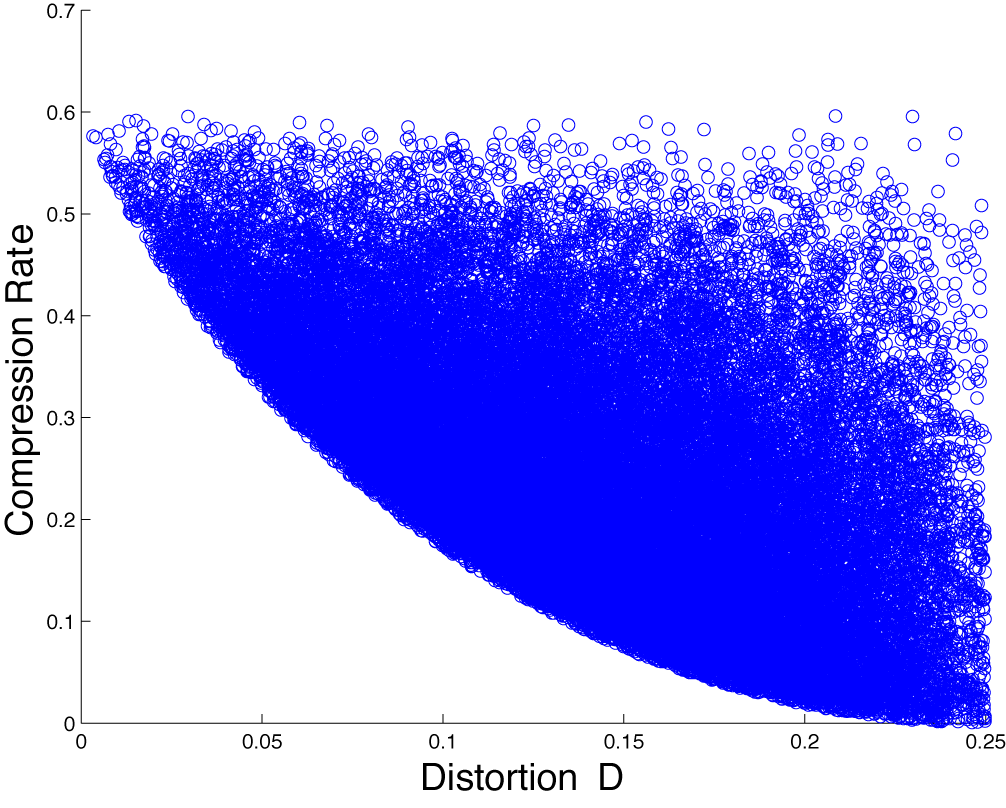}%
\caption{A plot of compression rate vs.~distortion for the quantum information source $\rho$ given by \reff{source_state}, and the rate distortion observable given by \reff{dist_obs}. It was obtained by randomly sampling 250,000 two-outcome POVMs, 
and (for those POVMs which satisfy the distortion criterion $D\le 1/4$) 
plotting the mutual information $I(X;R)_\sigma$ for the resulting state $\sigma_{RX}$ (defined by \reff{eq:cq-state-for-qc}) against the corresponding value of the distortion. The boundary of the shaded region defines the rate-distortion trade-off curve.}%
\label{fig:qc-curve}%
\end{center}
\end{figure}
\bigskip

The following example illustrates a scenario in which a quatum-to-classical rate distortion coding gives an advantage over a purely classical strategy.
\medskip

\noindent
{\bf{Example:}} Consider a quantum information source which 
generates the states $\left\vert
+\right\rangle $ and $\left\vert 0\right\rangle $ with equal 
probability 1/2, so that the density operator for the source is%
\be\label{source_state}
\rho=1/2\left(  \left\vert +\right\rangle \left\langle +\right\vert +\left\vert
0\right\rangle \left\langle 0\right\vert \right)  =\cos^{2}\left(
\pi/8\right)  \left\vert \phi_{0}\right\rangle \left\langle \phi
_{0}\right\vert +\sin^{2}\left(  \pi/8\right)  \left\vert \phi_{1}%
\right\rangle \left\langle \phi_{1}\right\vert ,
\ee
where%
\begin{align*}
\left\vert \phi_{0}\right\rangle  & \equiv\cos\left(  \pi/8\right)  \left\vert
0\right\rangle +\sin\left(  \pi/8\right)  \left\vert 1\right\rangle ,\\
\left\vert \phi_{1}\right\rangle  & \equiv\sin\left(  \pi/8\right)  \left\vert
0\right\rangle -\cos\left(  \pi/8\right)  \left\vert 1\right\rangle .
\end{align*}
A purification of the source state $\rho$ is given by%
\[
|\psi^\rho_{RA}\rangle = \cos\left(  \pi/8\right)  \left\vert \phi_{0}\right\rangle _{R}\left\vert
\phi_{0}\right\rangle _{A}+\sin\left(  \pi/8\right)  \left\vert \phi
_{1}\right\rangle _{R}\left\vert \phi_{1}\right\rangle _{A}.
\]
Suppose we are interested in measuring the distortion of the classical data 
(obtained as a result of a quantum-to-classical rate distortion task), by how much it deviates from the
quantum states that specify the ensemble of the quantum information source. In this case, we would
choose our distortion observable to be as follows:%
\be\label{dist_obs} 
\Delta_{RX}  =\left(  I-\left\vert +\right\rangle \left\langle +\right\vert
\right)  _{R}\otimes\left\vert 0\right\rangle \left\langle 0\right\vert
_{X}+\left(  I-\left\vert 0\right\rangle \left\langle 0\right\vert \right)
_{R}\otimes\left\vert 1\right\rangle \left\langle 1\right\vert _{X}.
\ee

Note that if we consider a two-outcome POVM $\Lambda=\{\Lambda_0, \Lambda_1\}$, where $\Lambda_0 = {I}/{2} = \Lambda_1$, then the state $\sigma_{RX}$ defined 
by \reff{eq:cq-state-for-qc} is given by 
$$
\sigma_{RX} = \rho_R \otimes \frac{I}{2},
$$
where$$\rho_R = \tr_A \{ \psi^\rho_{RA} \} = \cos^2\left(  \pi/8\right)  \left\vert \phi_{0}\right\rangle
\left\langle \phi_{0}\right\vert_{R} +\sin^2\left(  \pi/8\right)  \left\vert \phi
_{1}\right\rangle \left\langle \phi_{1}\right\vert_{R}.$$
In this case, the choice \reff{dist_obs} of the distortion observable yields the following value of the distortion:%
$$D\equiv \text{Tr}\left\{  \Delta_{RX}\left(  \text{id}_{R}\otimes\mathcal{M}%
_{\Lambda}\right)  \left(  \psi_{RA}^{\rho}\right)  \right\} = 1/4.
$$
Moreover, since the state $\sigma_{RX}$ is uncorrelated, we have that $I(X;R)_\sigma =0$, and hence, by Theorem~\ref{thm:qc-rd}, the rate distortion function $R^{qc}(D)$ is equal to zero. This implies that to obtain the full rate-distortion trade-off curve, one only needs to consider values of the distortion $D$ in the range $0\le D\le 1/4$.

The rate-distortion trade-off curve, for the above range of values of $D$, was obtained numerically for the rate distortion
observable defined by \reff{dist_obs}, and is given by the boundary of the shaded region in Fig.~\ref{fig:qc-curve}. As expected, the curve decreases monotonically with $D$.

To prove that in this case a quantum-to-classical rate distortion coding gives an advantage over a purely classical strategy, consider a two-outcome POVM $\Lambda$ which corresponds to a von Neumann measurement in the eigenbasis of the source state $\rho$, i.e., $\Lambda = \{ \Lambda_0, \Lambda_1\}$, where
$$ \Lambda_0 = |\phi_0\rangle \langle \phi_0| \quad {\hbox{and}} \quad  \Lambda_1 = |\phi_1\rangle \langle \phi_1|,$$
or, more generally, consider any  $\Lambda = \{ \Lambda_0, \Lambda_1\}$ such that 
$\cN(\Lambda_i) = \Lambda_i$ for $i=0,1$, where $\cN$ denotes a dephasing channel, with the dephasing being in the eigenbasis of $\rho$.
In this case one finds that, if the distortion observable is 
chosen as in \reff{dist_obs}, the distortion is always equal to the maximum allowed value $D=1/4$. This implies that for distortion in the range $0\le D < 1/4$, for the choice \reff{dist_obs}, quantum-to-classical rate-distortion coding gives an advantage over a classical strategy.\footnote{A ``classical strategy'' here corresponds to a measurement in the eigenbasis of $\rho$, followed by classical post-processing.}

\section{Quantum-to-classical rate-distortion coding with quantum side
information}

\label{sec-qsi}We now consider a class of protocols in which Alice and Bob
share many copies of some quantum state $\rho_{AB}$. This state can be
considered to arise from the action of an isometry on the state of a
memoryless quantum information source performed by a third party (say,
Charlie), who then distributes the systems $A$ and $B$ to Alice and Bob,
respectively. The system $B$ acts as Bob's quantum side information. We also
let Alice and Bob share common randomness. The goal is to quantify the minimum
rate at which Alice needs to send classical data to Bob, such that he can
reconstruct a classical approximation of the state $\rho_{A}=$ Tr$_{B}\left\{
\rho_{AB}\right\}  $ by using the received classical data and his quantum side
information. By a \textquotedblleft classical approximation,\textquotedblright%
\ we mean that for a fixed distortion $D\geq0$, where the distortion is
defined as%
\begin{equation}
d\left(  \rho,\mathcal{M}_{\Lambda}\right)  \equiv\text{Tr}\left\{
\Delta_{RXB}\left(  \text{id}_{R}\otimes\mathcal{M}_{\Lambda}\otimes
\text{id}_{B}\right)  (\psi_{RAB}^{\rho})\right\}  , \label{eq:dist-meas-qsi}%
\end{equation}
and a chosen distortion observable $\Delta$ of the following form:%
\begin{equation}
\Delta_{RBX}\equiv\sum_{x}\Delta_{RB}^{x}\otimes\left\vert x\right\rangle
\left\langle x\right\vert _{X}, \label{eq:qc-qsi-dist-obs}%
\end{equation}
we require that (\ref{eq:rate-dist-criterion}) is satisfied. The rate
distortion function in this scenario is defined in a manner analogous to
$R^{qc}(D)$ of the previous section, and is denoted as $R_{qsi}^{qc}\left(
D\right)  $. In the above, $\psi_{RAB}^{\rho}$ is a purification of the state
$\rho_{AB}$, and since we are interested in measuring the distortion that
occurs on the $A$ system only, the operators $\Delta_{RB}^{x}$ in
(\ref{eq:qc-qsi-dist-obs}) should act on all systems that purify the $A$
system. Figure~\ref{fig:qc-rd-QSI}\ depicts the most general protocol for
quantum-to-classical rate-distortion coding with quantum side information.%
\begin{figure}
[ptb]
\begin{center}
\includegraphics[
natheight=3.739500in,
natwidth=6.653900in,
height=2.0029in,
width=3.5397in
]%
{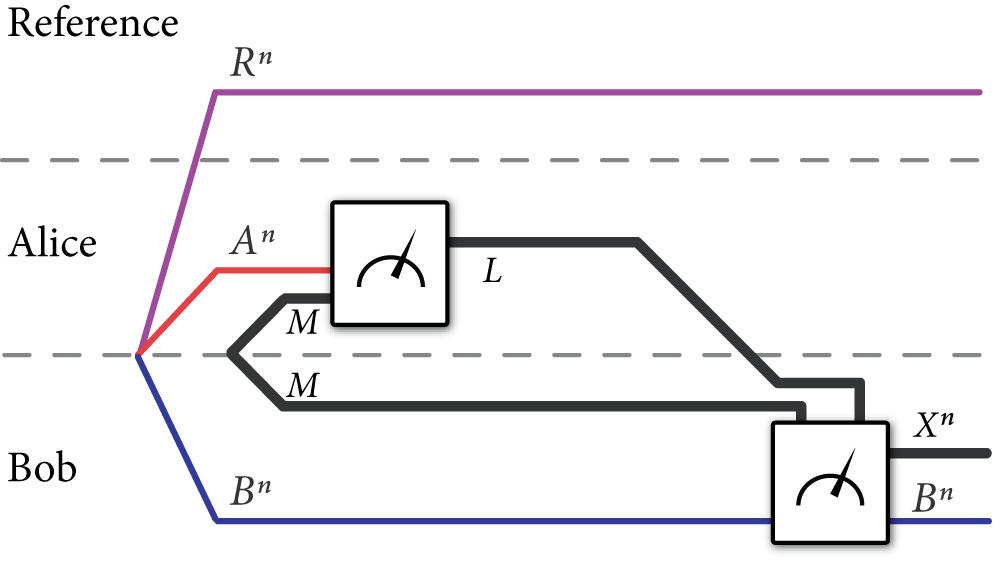}%
\caption{The most general protocol for quantum-to-classical rate-distortion
coding with quantum side information. Alice and Bob share many copies of a
quantum state $\rho_{AB}$, which is purified by an inaccessible reference
system. We also allow them access to common randomness $M$\ before the
protocol begins. Alice first performs a collective measurement on her systems,
producing a classical output $L$. She then transmits $L$ over noiseless
classical bit channels to Bob. Bob performs a collective measurement on his
quantum systems, depending on what he receives from Alice and his share of the
common randomness. This measurement produces a classical sequence $X^{n}$ and
has quantum outputs as well. The protocol is deemed successful if the
classical sequence $X^{n}$\ is not distorted on average from the quantum
source more than a specified amount according to a suitable distortion
observable. We also demand that the disturbance caused by the protocol to the
joint state of the reference and Bob's systems is asymptotically negligible.
This in turn implies that quantum side information suffers a negligible
disturbance and hence is available to Bob for future use.}%
\label{fig:qc-rd-QSI}%
\end{center}
\end{figure}

Ref.~\cite{WHBH12} contains a theorem that determines the optimal rates for
measurement compression in the presence of quantum side information. It almost
immediately leads to the following rate distortion theorem:

\begin{theorem}
\label{thm:qc-qsi} For a memoryless quantum information source characterized
by a state $\rho_{AB}$ (where Alice possesses $A$ and Bob possesses $B$), a quantum-to-classical
distortion observable $\Delta_{RBX}$, and
any given distortion $D\geq0$, an achievable rate for quantum-to-classical
rate distortion with quantum side information, when sufficient common
randomness is available, is given by%
\begin{equation}
\min_{\Lambda\ :\ d\left(  \rho,\mathcal{M}_{\Lambda}\right)  \leq D}I\left(
X;R|B\right)  _{\sigma}, \label{eq:qc-qsi-rd}%
\end{equation}
so that%
\begin{equation}
R_{qsi}^{qc}\left(  D\right)  \leq\min_{\Lambda\ :\ d\left(  \rho
,\mathcal{M}_{\Lambda}\right)  \leq D}I\left(  X;R|B\right)  _{\sigma},
\label{25a}%
\end{equation}
where $\Lambda\equiv\{\Lambda_{x}\}$ is a POVM acting only on Alice's system,
$d\left(  \rho,\mathcal{M}_{\Lambda}\right)  $ is defined through
(\ref{eq:dist-meas-qsi})-(\ref{eq:qc-qsi-dist-obs}), and $\psi_{RAB}^{\rho}$
is a purification of the state $\rho_{AB}$. The state $\sigma$ is the
following classical-quantum state:%
\begin{equation}
\sigma_{XRB}\equiv\sum_{x}\left\vert x\right\rangle \left\langle x\right\vert
_{X}\otimes\emph{Tr}_{A}\left\{  \left(  I_{R}\otimes\Lambda_{x}\otimes
I_{B}\right)  \left(  \psi_{RAB}^{\rho}\right)  \right\}  . \label{sigma}%
\end{equation}

\end{theorem}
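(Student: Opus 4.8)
The plan is to deduce this achievability statement directly from the measurement compression theorem with quantum side information of Ref.~\cite{WHBH12}, in exact parallel with the achievability half of the proof of Theorem~\ref{thm:qc-rd}. First I would fix a POVM $\Lambda=\{\Lambda_x\}$ attaining the minimum on the right-hand side of \reff{eq:qc-qsi-rd}, so that $d(\rho,\mathcal{M}_\Lambda)\le D$ and $\sigma_{XRB}$ is the associated classical--quantum state \reff{sigma}. Applying the measurement-compression-with-QSI theorem to the i.i.d.\ measurement $\Lambda^{\otimes n}$ performed on $\rho_{AB}^{\otimes n}$ (with purification $\psi^{\rho}_{R^nA^nB^n}$), one obtains, for every $\eps>0$ and all sufficiently large $n$: a family of measurement encodings $\{\Lambda^{(m)}_l\}$ acting on $A^n$, indexed by a message $l$ taking at most $2^{nI(X;R|B)_\sigma+O(\sqrt n)}$ values and by the shared common randomness $m$, together with decoding measurements that Bob performs on $B^n$ using $l$ and $m$ and that output a sequence $x^n$, such that the resulting state on $R^nB^nX^n$ (together with Bob's residual quantum registers) is within trace distance $\eps$ of the ideal state $(\id_{R^n}\otimes\mathcal{M}_{\Lambda^{\otimes n}}\otimes\id_{B^n})(\psi^{\rho}_{R^nA^nB^n})$, while consuming only $O(n)$ bits of common randomness. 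The classical communication rate of this protocol is $\frac1n\log L=I(X;R|B)_\sigma+O(1/\sqrt n)\to I(X;R|B)_\sigma$.

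Second, I would transfer the distortion constraint exactly as in \reff{one2}: because the symbol-wise distortion observable is additive (the analog of $\Delta^{(n)}$ in \reff{symb}, now built from the operators $\Delta^x_{RB}$) and every single-letter marginal of the ideal i.i.d.\ state equals $\sigma_{XRB}$, the ideal protocol has average distortion exactly $d(\rho,\mathcal{M}_\Lambda)\le D$. Writing $d_{\max}$ for the largest eigenvalue of the symbol-wise observable and using $|\mathrm{Tr}(\Theta(\omega-\omega'))|\le d_{\max}\|\omega-\omega'\|_1$ for $\|\Theta\|_\infty=d_{\max}$ together with the trace-distance bound above, the simulated protocol has average distortion at most $D+d_{\max}\eps$. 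Since the model of Section~\ref{sec-qsi} supplies the common randomness for free, no derandomization is needed: letting $\eps\to0$ and $n\to\infty$ gives $\limsup_n\overline d(\rho,\cdot)\le D$, so $(I(X;R|B)_\sigma,D)$ is achievable, and minimizing over all $\Lambda$ meeting the distortion constraint yields \reff{25a}. (Had one insisted on a common-randomness-free protocol, the same derandomization of $m$ used in the proof of Theorem~\ref{thm:qc-rd} would apply and give the same rate.)

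Finally, the extra requirement imposed in Section~\ref{sec-qsi} --- that the protocol disturb the joint reference-and-Bob state only negligibly --- comes for free, since the guarantee furnished by Ref.~\cite{WHBH12} is precisely that the simulated output on $R^nB^nX^n$ and Bob's quantum registers is $\eps$-close to the ideal state, in which $R^n$ and $B^n$ are left untouched apart from the partial trace. I expect the main obstacle to be not any individual estimate but the bookkeeping around Ref.~\cite{WHBH12}: extracting its statement in the form that produces the \emph{conditional} mutual information $I(X;R|B)_\sigma$ (Bob mining $I(X;B)_\sigma$ worth of correlation from his side information, in Wyner--Ziv fashion) rather than $I(X;RB)_\sigma$, and checking that its notion of faithful simulation is strong enough to control the distortion bound and the side-information-disturbance bound simultaneously. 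Apart from that, the argument mirrors the achievability part of Theorem~\ref{thm:qc-rd}.
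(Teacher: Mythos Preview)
Your proposal is correct and follows essentially the same route as the paper: fix the optimizing POVM, invoke the measurement-compression-with-quantum-side-information theorem of Ref.~\cite{WHBH12} to simulate $\Lambda^{\otimes n}$ at classical rate $I(X;R|B)_\sigma$, and then transfer the distortion bound via the trace-distance closeness exactly as in Theorem~\ref{thm:qc-rd}. The paper's proof is terser but identical in substance; your parenthetical remark about derandomization matches the paper's closing observation, and your third paragraph on negligible disturbance is a correct bonus (though strictly speaking that requirement enters only in Theorem~\ref{thm:qc-cr-qsi}).
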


\begin{proof}
The proof of the achievability part of this theorem proceeds similarly to that
of Theorem~\ref{thm:qc-rd}. We merely fix the POVM\ that minimizes the RHS\ of
(\ref{eq:qc-qsi-rd}). From this POVM, we can construct a protocol for
measurement compression with quantum side information by invoking Theorem~12
of ~\cite{WHBH12}. This protocol exploits classical communication at a rate
$I\left(  X;R|B\right)  _{\sigma}$ and common randomness at a rate $H\left(
X|RB\right)  _{\sigma}$ in order to simulate the action of the POVM\ on many
copies of the state $\rho_{AB}$. By an argument similar to that in the proof
of the achievability part of Theorem~\ref{thm:qc-rd}, we know that such a
protocol meets the distortion criterion and that it is possible to derandomize
the common randomness in the same way as in (\ref{eq:derandomize-QC-RD}).
\end{proof}

If, in addition, we demand that the protocol causes asymptotically negligible
disturbance of the state of Bob (i.e., the quantum side information) and the
state of the reference system, then we can prove that the upper bound in
(\ref{25a}) is achieved. Hence, in this case, the rate distortion function,
which we denote as $\tR^{qc}_{qsi}$, is
given by a single-letter formula. The requirement of the protocol leaving the
states of Bob and the reference essentially undisturbed might seem somewhat
restrictive at first. However, it can be justified as follows. Firstly, note
that ignoring the quantum side information leads to a protocol with a
classical communication rate of $I\left(  X;RB\right)  $ which of course does
not disturb the systems of the reference and Bob in any way. Secondly, Bob
might wish to use the quantum side information in some future
information-processing task, which therefore leads to the above requirement on
the state of his system. In light of this, it seems reasonable to restrict
consideration to a class of protocols in which Bob is allowed to exploit the
quantum side information, but only in a way which causes negligible
disturbance to it. These considerations yield the following theorem:

\begin{theorem}
\label{thm:qc-cr-qsi} For a memoryless quantum information source $\rho_{AB}$
(where Alice possesses $A$ and Bob possesses $B$), a quantum-to-classical
distortion observable $\Delta_{RBX}$, and any given distortion
$D\geq0$, the quantum-to-classical rate distortion function with quantum side
information, sufficient common randomness, and such that the protocol causes
only a negligible disturbance to the systems of the reference and Bob, is
given by%
\begin{equation}
\tR^{qc}_{qsi}\left(  D\right)  =\min_{\Lambda\ :\ d\left(  \rho,\mathcal{M}%
_{\Lambda}\right)  \leq D}I\left(  X;R|B\right)  _{\sigma},
\label{eq:single-letter-QC-QSI-RD}%
\end{equation}
where the state $\sigma$ is as defined in (\ref{sigma}) of
Theorem~\ref{thm:qc-qsi}.
\end{theorem}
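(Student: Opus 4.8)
The plan is to establish the two matching bounds. The achievability direction ($\tR^{qc}_{qsi}(D)\leq\min_\Lambda I(X;R|B)_\sigma$) is essentially already in hand: it follows from Theorem~\ref{thm:qc-qsi}, provided we check that the measurement-compression-with-quantum-side-information protocol of Ref.~\cite{WHBH12}, after derandomization of the common randomness as in \reff{eq:derandomize-QC-RD}, satisfies the extra requirement that the joint state of the reference $R^n$ and Bob's system $B^n$ is disturbed only negligibly. This is precisely the faithfulness guarantee built into that measurement compression theorem: a verifier holding $R^n$, $B^n$, and the classical output registers cannot distinguish the simulation from the ideal measurement, which in particular means the marginal on $R^nB^n$ is close in trace distance to $\rho_{RB}^{\otimes n}$. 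So I would spend one short paragraph invoking Theorem~\ref{thm:qc-qsi} and noting that its proof already delivers the disturbance condition, hence the inequality ``$\leq$'' in \reff{eq:single-letter-QC-QSI-RD}.

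The real work is the converse, $\tR^{qc}_{qsi}(D)\geq\min_\Lambda I(X;R|B)_\sigma$. First I would set up an arbitrary code: a POVM $\Lambda^{(n)}:A^n\to L$ with $\lfloor 2^{nR}\rfloor$ outcomes, followed by Bob's decoding instrument $\mathcal{D}_n:LB^n\to X^nB^n$ (which produces the classical guess $X^n$ and leaves a quantum output on $B^n$), meeting the average distortion constraint and the negligible-disturbance constraint. Define $\sigma_{R^nLB^n}=(\id_{R^n}\otimes\mathcal{M}_{\Lambda^{(n)}}\otimes\id_{B^n})\psi_{R^nA^nB^n}^\rho$ and $\omega_{X^nR^nB^n'}$ as the state after Bob's decoding. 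Then I would run the chain
\[
nR\geq H(L)_\sigma\geq H(L|B^n)_\sigma\geq I(L;R^n|B^n)_\sigma\geq I(X^n;R^n|B^n)_\omega,
\]
where the second step uses subadditivity of entropy (conditioning reduces entropy), the third uses $H(L|R^nB^n)_\sigma\geq 0$ since $L$ is classical, and the fourth is the data-processing inequality (Lemma~\ref{dataproc}) applied to $\mathcal{D}_n$ acting on $LB^n$. The negligible-disturbance hypothesis enters here to ensure that $\omega_{R^nB^n'}\approx\rho_{RB}^{\otimes n}$, which is what lets us compare $I(X^n;R^n|B^n)_\omega$ against the single-letter quantity evaluated at the true side-information state; I would make this precise via an Alicki--Fannes/continuity estimate contributing an $o(n)$ correction.

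Next comes the single-letterization. The delicate point is that $I(X^n;R^n|B^n)$ conditioned on $B^n$ is \emph{not} obviously superadditive in the way Lemma~\ref{super} gives for unconditioned mutual information; the standard fix is to absorb $B^n$ into the reference. Concretely, since $\psi_{RAB}^\rho$ purifies $\rho_{AB}$, the system $R$ already purifies $AB$, so I would instead write $I(X^n;R^nB^n)_\omega$ — this equals $I(X^n;R^n|B^n)_\omega + I(X^n;B^n)_\omega$, and because $B^n$ is untouched by Alice's measurement and $X^n$ depends on $B^n$ only through $L$, one argues $I(X^n;B^n)_\omega$ is $O(\sqrt n)$ or can be handled by the disturbance condition — and apply superadditivity (Lemma~\ref{super}) to the i.i.d.\ structure to get $I(X^n;R^nB^n)_\omega\geq\sum_i I(X_i;R_iB_i)$ for the induced marginal maps $\mathcal{F}_n^{(i)}$. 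For each $i$, $I(X_i;R_iB_i)=I(X_i;R_i|B_i)+I(X_i;B_i)$ and the second term vanishes on the true source state since the single-copy induced measurement acts only on $A_i$; thus $I(X_i;R_iB_i)\geq I(X_i;R_i|B_i)_{\sigma^{(i)}}\geq \tR^{qc}_{qsi}(d(\rho,\mathcal{F}_n^{(i)}))$ by definition of the rate-distortion function, where $\sigma^{(i)}$ is the single-letter state \reff{sigma} for the POVM $\mathcal{F}_n^{(i)}$. Finally, averaging over $i$, using convexity of $\tR^{qc}_{qsi}(\cdot)$ (which I would note follows from a time-sharing argument on POVMs, exactly as for $R^{qc}$), then using $\tfrac1n\sum_i d(\rho,\mathcal{F}_n^{(i)})\leq D+o(1)$ and monotonicity of $\tR^{qc}_{qsi}$ in $D$, yields $nR\geq n\,\tR^{qc}_{qsi}(D) - o(n)$, and letting $n\to\infty$ completes the converse.

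The main obstacle I anticipate is the handling of the conditioning system $B^n$ throughout: making rigorous that $I(X^n;B^n)$-type cross terms are negligible, that the negligible-disturbance hypothesis really does control the continuity corrections, and that the single-letter converse quantity is the one with $B$ on the conditioning side rather than grouped with $R$. Everything else (the entropy bounds, data processing, superadditivity, convexity) is a routine adaptation of the converse of Theorem~\ref{thm:qc-rd}.
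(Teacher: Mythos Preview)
Your achievability sketch is essentially the paper's, with one caveat: Theorem~\ref{thm:qc-cr-qsi} explicitly assumes sufficient common randomness, so you should \emph{not} derandomize (indeed, the paper's concluding discussion points out that derandomizing may destroy the negligible-disturbance guarantee for some values of the randomness). Simply invoking the faithfulness of the measurement-compression-with-side-information simulation from Ref.~\cite{WHBH12} suffices.

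The converse has a genuine gap. A minor point first: your setup omits the common randomness $M$, which the theorem permits; the paper carries $M$ through and removes it via $I(R^n;B^nM)_\sigma=I(R^n;B^n)_\sigma$. More seriously, your single-letterization breaks. The claim that $I(X^n;B^n)_\omega$ is negligible because ``$X^n$ depends on $B^n$ only through $L$'' is false: Bob's decoder acts on $LB^n$ jointly, so $X^n$ can be $\Theta(n)$-correlated with $B^n$. And Lemma~\ref{super} does not apply to $I(X^n;R^nB^n)_\omega$ as you want, because the overall map producing $\omega$ touches $B^n$; the lemma needs the reference systems to be untouched. The paper instead works with the decomposition $I(LMB^n;R^n)_\sigma-I(R^n;B^n)_\sigma$, keeping $R^n$ alone as the (passive) reference, so that Lemma~\ref{super} applies to the first term after data processing and the second term is handled by the disturbance hypothesis via Alicki--Fannes.

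The idea you are really missing is this: even after single-letterizing, the induced marginal operations $\mathcal{F}_n^{(i)}$ of the full protocol act on $A_iB_i$, not on $A_i$ alone, whereas $\tR^{qc}_{qsi}(D)$ is a minimum over POVMs on \emph{Alice's system only}. Hence the step $I(X_i;R_i|B_i)\geq \tR^{qc}_{qsi}\bigl(d(\rho,\mathcal{F}_n^{(i)})\bigr)$ is unjustified. The paper fixes this with an Uhlmann-theorem argument (from Ref.~\cite{WHBH12}): the negligible-disturbance condition on $R^nB^n$ guarantees an isometry $U^{A^n\to X^nI}$ acting solely on Alice's side whose output is $2\sqrt{\varepsilon}$-close to the protocol's output. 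Replacing the protocol by this Alice-only map costs only an additional $n\varepsilon'$, and now the induced marginals $\mathcal{G}_n^{(k)}$ are bona fide POVMs on $A$, so the comparison with $\tR^{qc}_{qsi}$ goes through. This Uhlmann step is precisely where the negligible-disturbance hypothesis does its real work in the converse.
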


\begin{proof}
The proof of the achievability part of this theorem again follows directly
from Theorem~12 of Ref.~\cite{WHBH12} which deals with measurement compression
in the presence of quantum side information. We merely fix the map that
minimizes the expression in (\ref{eq:single-letter-QC-QSI-RD}) and apply the
aforementioned theorem. The resulting protocol meets the distortion constraint
because of the way that the POVM\ is chosen in
(\ref{eq:single-letter-QC-QSI-RD}).

The converse part of this theorem exploits the approach from the converse
parts of Theorems~12 and 14 of Ref.~\cite{WHBH12}, which in turn exploit ideas
of Cuff \cite{C08}. The most general protocol begins with the state $\left(
\psi^{\rho}_{RAB}\right)  ^{\otimes n}$ shared between the reference, Alice,
and Bob. We let Alice and Bob share common randomness as well (embodied in
some random variable $M$). Alice performs an encoding on her systems $A^{n}$
with the help of her share of the common randomness $M$, producing a classical
output given by the random variable $L$ which takes values in a finite
alphabet $\mathcal{L}$. Let $\sigma$ denote the state at this point. Also, let
$R$ be the rate of classical communication, i.e., $R = (\log_{2}
|\mathcal{L}|)/n$. Alice sends $L$ to Bob, who then combines this with his
share of the common randomness to perform some decoding map on $B^{n}$,
producing a classical sequence $X^{n}$ and a quantum system $B^{\prime n}$.
Let $\omega$ denote the final state after this encoding-decoding procedure.
Further, let ${\mathcal{F}}_{n}^{A^{n}B^{n} \to X^{n}B^{\prime n}}$ denote the
effective CPTP map on $\rho_{AB}^{\otimes n}$ (the state that Alice and Bob
share at the start of the protocol) resulting from these encoding and decoding
operations.
We demand that the distortion of the output $X^{n}$ be no larger than $D$ (in
a sense similar to that in (\ref{eq:distortion-converse}), though in this case
we need to trace over the systems $B^{\prime n}$), and we furthermore demand
that the trace distance between $\left(  \psi^{\rho}_{RB}\right)  ^{\otimes
n}$ and the state $\omega_{R^{n}B^{\prime n}}$\ on systems $R^{n}B^{\prime n}$
(at the end of the protocol) be no larger than some arbitrarily small
$\varepsilon>0$. The converse then proceeds as follows. For $n$ large enough,
\begin{align}
\label{38}nR  &  \geq H\left(  L\right)  _{\sigma}\nonumber\\
&  \geq I\left(  L;MB^{n}R^{n}\right)  _{\sigma}\nonumber\\
&  =I\left(  LMB^{n};R^{n}\right)  _{\sigma}+I\left(  L;MB^{n}\right)
_{\sigma}-I\left(  R^{n};B^{n}M\right)  _{\sigma}\nonumber\\
&  \geq I\left(  LMB^{n};R^{n}\right)  _{\sigma}-I\left(  R^{n};B^{n}\right)
_{\sigma}\nonumber\\
&  \geq I\left(  X^{n}B^{\prime n};R^{n}\right)  _{\omega}-I\left(
R^{n};B^{\prime n}\right)  _{\omega}-n\varepsilon^{\prime}\nonumber\\
&  \geq\sum_{k}\left[  I\left(  X_{k}B_{k}^{\prime};R_{k}\right)  _{\omega
}-I\left(  R_{k};B_{k}^{\prime}\right)  \right]  -2n\varepsilon^{\prime
}\nonumber\\
&  =\sum_{k}I\left(  X_{k};R_{k}|B_{k}^{\prime}\right)  _{\omega
}-2n\varepsilon^{\prime}.%
\end{align}
The first inequality follows because the entropy of a system is always less
than the logarithm of its dimension. The second inequality follows because
$I\left(  L;MB^{n}R^{n}\right)  _{\sigma}=H\left(  L\right)  _{\sigma
}-H\left(  L|MB^{n}R^{n}\right)  _{\sigma}$ and $H\left(  L|MB^{n}%
R^{n}\right)  _{\sigma}\geq0$ for a classical $L$. The first equality is an
identity for quantum mutual information. The third inequality follows because
the common randomness $M$ is in a product state with $R^{n}B^{n}$ so that
$I\left(  R^{n};B^{n}M\right)  _{\sigma}=I\left(  R^{n};B^{n}\right)
_{\sigma}$ and because $I\left(  L;MB^{n}\right)  _{\sigma}\geq0$. The fourth
inequality follows from quantum data processing, Lemma~\ref{dataproc}, (the
systems $LMB^{n}$ are processed to produce systems $X^{n}B^{\prime n}$), and
from the requirement that the protocol causes negligible disturbance of the
state of $R^{n}B^{n}$. The term $\varepsilon^{\prime}$ (which is a function of
${\varepsilon}$) arises from an application of the Alicki-Fannes' inequality
\cite{AF04}, where $\lim_{\varepsilon\rightarrow0}\varepsilon^{\prime}\left(
\varepsilon\right)  =0$. The fifth inequality follows from superadditivity of
quantum mutual information (Lemma~\ref{super}) and because the state on
$R^{n}B^{\prime n}$ is close in trace distance to a tensor-product state (see
Lemma~10 of Ref.~\cite{WHBH12}). The second equality follows from the identity
$I\left(  X_{k}B_{k}^{\prime};R_{k}\right)  _{\omega}-I\left(  R_{k}%
;B_{k}^{\prime}\right)  =I\left(  X_{k};R_{k}|B_{k}^{\prime}\right)  _{\omega
}$.

At this point, we have argued that the above lower bound holds for a protocol
that exploits common randomness and classical communication to implement a map
$\mathcal{F}_{n}^{A^{n}B^{n}\rightarrow X^{n}B^{\prime n}}$. This map meets
the distortion constraint while also causing only a negligible disturbance to
the state on $R^{n}B^{n}$, in the sense that%
\[
\left\Vert \text{Tr}_{X^{n}}\left\{  \mathcal{F}_{n}^{A^{n}B^{n}\rightarrow
X^{n}B^{\prime n}}\left(  \left(  \psi^{\rho}_{RAB}\right)  ^{\otimes
n}\right)  \right\}  -\left(  \psi{\rho}_{RB}\right)  ^{\otimes n}\right\Vert
_{1}\leq\varepsilon.
\]
As in the proof of Theorem~14 of Ref.~\cite{WHBH12}, applying Uhlmann's
theorem to the above condition guarantees that there is some
map acting only on Alice's system, such that the information quantity in the
last line of the above chain of inequalities (\ref{38}) does not change too
much. For completeness, we repeat the argument here. Let the Kraus
representation of $\mathcal{F}_{n}^{A^{n}B^{n}\rightarrow X^{n}B^{\prime n}}$
be given by%
\[
\mathcal{F}_{n}^{A^{n}B^{n}\rightarrow X^{n}B^{\prime n}}\left(  \cdot\right)
=\sum_{i}F_{i}\left(  \cdot\right)  F_{i}^{\dag}.
\]
A purification of Tr$_{X^{n}}\left\{  \mathcal{F}_{n}^{A^{n}B^{n}\rightarrow
X^{n}B^{\prime n}}\left(  \left(  \psi^{\rho}_{RAB}\right)  ^{\otimes
n}\right)  \right\}  $ is given by%
\begin{equation}
\sum_{i}F_{i}\left(  \left\vert \psi^{\rho}_{RAB}\right\rangle \right)
^{\otimes n}\otimes\left\vert i\right\rangle _{I}, \label{eq:Uhlmann-ideal}%
\end{equation}
where $I$ is a purifying system, while a purification of $\left(  \psi^{\rho
}_{RB}\right)  ^{\otimes n}$ is $\left(  \left\vert \psi^{\rho}_{RAB}%
\right\rangle \right)  ^{\otimes n}$. By Uhlmann's theorem, there is an
isometry $U^{A^{n}\rightarrow X^{n}I}$\ acting only on Alice's system, taking
$\left(  \left\vert \psi^{\rho}_{RAB}\right\rangle \right)  ^{\otimes n}$ to
an approximation of the state in (\ref{eq:Uhlmann-ideal}) such that the trace
distance between this state and $U^{A^{n}\rightarrow X^{n}I}\left(  \left\vert
\psi^{\rho}_{RAB}\right\rangle \right)  ^{\otimes n}$ is at most
$2\sqrt{\varepsilon}$. Thus, the map on Alice's side consists of applying
$U^{A^{n}\rightarrow X^{n}I}$ and tracing out $I$. Let $\omega^{\prime}$
denote the resulting state. By exploiting this map instead of the original
one, we find the following lower bound on the information quantity in
(\ref{38}):%
\[
\sum_{k}I\left(  X_{k};R_{k}|B_{k}\right)  _{\omega^{\prime}}-3n\varepsilon
^{\prime}.
\]
The important feature of this approximation map is that it acts only on
Alice's side. Continuing, we have%
\begin{align*}
&  \geq\sum_{k}R^{qc}_{qsi}\left(  d\left(  \rho,\mathcal{G}_{n}^{\left(
k\right)  }\right)  \right)  -3n\varepsilon^{\prime}\\
&  =n\sum_{k}\frac{1}{n}R^{qc}_{qsi}\left(  d\left(  \rho,\mathcal{G}%
_{n}^{\left(  k\right)  }\right)  \right)  -3n\varepsilon^{\prime}\\
&  \geq nR^{qc}_{qsi}\left(  \sum_{k}\frac{1}{n}d\left(  \rho,\mathcal{G}%
_{n}^{\left(  k\right)  }\right)  \right)  -3n\varepsilon^{\prime}\\
&  \geq nR^{qc}_{qsi}\left(  D\right)  -3n\varepsilon^{\prime}.%
\end{align*}
In the above, $\mathcal{G}_{n}^{\left(  k\right)  }$ is the marginal operation
on the $k^{\text{th}}$ copy of the source space induced by the overall
encoding and approximation of the decoding guaranteed by Uhlmann's theorem.
The first inequality follows from the fact that the map $\mathcal{G}%
_{n}^{\left(  k\right)  }$ has distortion $d( \rho,\mathcal{G}_{n}^{\left(
k\right)  }) $ and the expression (\ref{eq:single-letter-QC-QSI-RD}) for the
rate-distortion function $R^{qc}_{qsi}$ in Theorem~\ref{thm:qc-cr-qsi}
involves a minimum over all maps on Alice's system with this distortion. The
first equality is obvious. The last two inequalities follow because the
rate-distortion function is convex and non-increasing as a function of $D$
(the proof of convexity is similar to the proof of Lemma~14 of
Ref.~\cite{DHW11}, though here we rely on the map acting solely on Alice's system).
\end{proof}

A special case of the above theorem is the setting considered in Theorem 4.2
of Ref.~\cite{LD09}. There, Luo and Devetak considered the scenario in which
the source is a classical-quantum state of the form:%
\[
\sum_{y}p_{Y}\left(  y\right)  \left\vert y\right\rangle \left\langle
y\right\vert _{Y}\otimes\rho_{B}^{y},
\]
where Alice possesses $Y$ and Bob $B$. The goal is for Alice to transmit her
classical data to Bob up to some distortion, and Bob is allowed to use the
quantum side information to help reduce the communication costs. They proved
that the following rate is achievable:%
\[
\min_{p_{X|Y}\left(  x|y\right)  \ :\ \mathbb{E}\left\{  d\left(  x,y\right)
\right\}  \leq D}I\left(  X;Y|B\right)  _{\sigma},
\]
for some classical distortion measure $d\left(  x,y\right)  $ and where the
information quantity is with respect to a state of the following form:%
\[
\sum_{y}p_{X|Y}\left(  x|y\right)  p_{Y}\left(  y\right)  \left\vert
y\right\rangle \left\langle y\right\vert _{Y}\otimes\left\vert x\right\rangle
\left\langle x\right\vert _{X}\otimes\rho_{B}^{y}.
\]
Luo and Devetak were not able to find a single-letter characterization of the
rate-distortion function, but with our additional assumptions of sufficient
common randomness and a negligible disturbance of the quantum side
information, our theorem reduces to a single-letter characterization for their
setting. In fact, if one chooses the distortion observable in
(\ref{eq:qc-qsi-dist-obs}) so that the operators $\Delta_{RB}^{x}$ are
diagonal in the Schmidt basis of the $RB$\ systems of $\psi_{RAB}^{\rho}$,
then a similar statement as in Lemma~\ref{lem:no-better-than-shannon}%
\ applies. That is, in this case, it is optimal to measure the $A$ system in
the eigenbasis of $\rho_{A}$ and proceed according to the protocol of Luo and
Devetak in Ref.~\cite{LD09}. As stated above, their protocol is optimal if we
demand that it cause only a negligible disturbance to the state of the
reference and Bob.

\section{Conclusions and discussions}

We have derived a single-letter formula for the quantum-to-classical rate
distortion function. The goal in quantum-to-classical rate-distortion coding
is to provide a compressed classical approximation of a quantum source, up to
some specified level of distortion, as determined by a distortion observable.
The formula is expressed as a minimization of a quantum mutual information over
all quantum-to-classical channels that meet the distortion constraint. In
general, our results show that a collective measurement of the quantum source
is required to obtain optimal compression rates. However, if the distortion
observable has a classical form (so that each operator $\Delta_x$ is diagonal
in the Schmidt basis), then the best strategy for
quantum-to-classical rate-distortion coding ends up being an effectively
classical strategy, in which Alice performs individual measurements of each copy
of the source in its eigenbasis, and processes the resulting classical data
according to Shannon's classical rate-distortion protocol.

We have also derived a single-letter formula for the quantum-to-classical
rate distortion function when the receiver has some 
quantum side information about the source. Our
assumptions are that Alice and Bob share sufficient common
randomness, and that the protocol causes only a negligible disturbance to the
joint state of the reference and the quantum side information. We consider this
latter assumption to be rather natural, since Bob might wish to make use of
his quantum side information in some future protocol. Our results suggest that
it might generally be possible for quantum information-theoretic protocols
that employ quantum side information to be simplified by employing this
assumption, due to the restriction that it imposes on the quantum states at
the output of a given protocol. This assumption is purely non-classical, since
it is always possible to copy classical information before processing it in
any way.

There are some interesting open questions to consider going forward from here.
It would be ideal if we could derandomize the common randomness in the
protocol that uses quantum side information, since it would imply that this
extra resource is unnecessary. However, if we did so, the protocol for
measurement compression with quantum side information could end up causing a
non-negligible disturbance to the joint state of the reference and Bob's systems, for
some of the values of the common randomness. Since our approach in the proof
of the achievability part of the coding theorem relies on this protocol, we have not been able to conclude
that the common randomness is unnecessary. However, the common randomness plays only a
passive role in the converse theorem, and this suggests that it might
ultimately be unnecessary. In order to determine if this is the case, one
would have to consider a different protocol in proving the achievability part
of the coding theorem.
\bigskip

We acknowledge Patrick Hayden and Ke Li for useful discussions. MMW\ acknowledges support
from the Centre de Recherches Math\'{e}matiques at the University of Montreal.
MH received support from the Chancellor's postdoctoral research fellowship,
University of Technology Sydney (UTS), and was also partly supported by the
National Natural Science Foundation of China (Grant No.~61179030) and the
Australian Research Council (Grant No.~DP120103776). AW was supported by the Royal Society, the Philip Leverhulme Trust, EC integrated project QAP (contract IST-2005-15848), the STREPs QICS and QCS, and the ERC Advanced Grant ``IRQUAT''.

\bibliographystyle{plain}
\bibliography{Ref}

\end{document}